\newtheorem{thm}{Theorem}[section]
\newtheorem{cor}[thm]{Corollary}
\newtheorem{obs}[thm]{Observation}
\newtheorem{prop}[thm]{Proposition}
\newtheorem{rem}[thm]{Remark}
\newtheorem{definition}{Definition}
\def\squarebox#1{\hbox to #1{\hfill\vbox to #1{\vfill}}}
\newcommand{\qed}{\hspace*{\fill}
\vbox{\hrule\hbox{\vrule\squarebox{.667em}\vrule}\hrule}\smallskip}
\newenvironment{proof}{\noindent{\bf Proof:~~}}{\(\qed\)}
\newcommand{\ie}{\emph{i.e.}}
\newcommand{\eg}{\emph{e.g.}}
\newcommand{\etal}{\emph{et al}}
\newcommand{\twoSPA}{{\sc 2nd-Price Auc\-tion$_{k}$}\xspace}
\newcommand{\twoSPhA}{{\sc 2nd-Price-Auc\-tion$_{k}$}\xspace}
\newcommand{\twoMP}{{\sc The Millionaires Prob\-lem$_{k}$}\xspace}
\newcommand{\bp}{{\sc Bisection Pro\-to\-col}\xspace}
\newcommand{\ba}{{\sc Bisection Auc\-tion}\xspace}
\newcommand{\bha}{{\sc Bisection-Auc\-tion}\xspace}
\newcommand{\bba}{{\sc Bisection Auc\-tion$_{g(k)}$}\xspace}
\newcommand{\bhba}{{\sc Bisection-Auc\-tion$_{g(k)}$}\xspace}
\newcommand{\pgk}{{\sc Public Good$_{k}$}\xspace}
\newcommand{\tpgkc}{{\sc Truthful Public Good$_{k,c}$}\xspace}
\begin{document}

\title{Approximate Privacy:\\ Foundations and Quantification}

\author{Joan Feigenbaum\thanks{Supported in part by NSF grants
0331548 and 0534052 and IARPA grant FA8750-07-0031.}\\
Dept.\ of Computer Science\\
Yale University\\
\texttt{joan.feigenbaum@yale.edu}\and Aaron D.\ Jaggard\thanks{Supported in part by
NSF grants 0751674 and 0753492.}\\
DIMACS\\
Rutgers University\\
\texttt{adj@dimacs.rutgers.edu}\and Michael
Schapira\thanks{Supported by NSF grant 0331548.}\\
Depts.\ of Computer Science\\
Yale University and UC Berkeley\\
\texttt{michael.schapira@yale.edu}}

\date{}

\maketitle

\begin{abstract}
Increasing use of computers and networks in business, government,
recreation, and almost all aspects of daily life has led to a
proliferation of online sensitive data about individuals and
organizations. Consequently, concern about the \emph{privacy} of
these data has become a top priority, particularly those data that
are created and used in electronic commerce. There have been many
formulations of privacy and, unfortunately, many negative results
about the feasibility of maintaining privacy of sensitive data in
realistic networked environments. We formulate
communication-complexity-based definitions, both worst-case and
average-case, of a problem's {\it privacy-approximation ratio}. We
use our definitions to investigate the extent to which approximate
privacy is achievable in two standard problems: the \emph{$2^{nd}$-price
Vickrey auction}~\cite{Vic61} and the \emph{millionaires problem} of
Yao~\cite{Yao-millionaires}.

For both the $2^{nd}$-price Vickrey auction and the millionaires
problem, we show that not only is perfect privacy impossible or
infeasibly costly to achieve, but even \emph{close approximations}
of perfect privacy suffer from the same lower bounds. By contrast,
we show that, if the values of the parties are drawn uniformly at
random from $\{0,\ldots,2^k-1\}$, then, for both problems, simple
and natural communication protocols have privacy-approximation
ratios that are linear in $k$ (\emph{i.e.}, logarithmic in the size
of the space of possible inputs). We conjecture that this improved
privacy-approximation ratio is achievable for \emph{any} probability
distribution.
\end{abstract}

\newpage

\section{Introduction}\label{sec-intro}

Increasing use of computers and networks in business, government,
recreation, and almost all aspects of daily life has led to a
proliferation of online sensitive data about individuals and
organizations. Consequently, the study of privacy has become a top
priority in many disciplines.  Computer scientists have contributed
many formulations of the notion of {\it privacy-preserving computation}
that have opened new avenues of investigation and shed new light on
some well studied problems.

One good example of a new avenue of investigation opened by concern
about privacy can be found in auction design, which was our original
motivation for this work.  Traditional auction theory is a central
research area in Economics, and one of its main questions is how to
incent bidders to behave truthfully, {\it i.e.}, to reveal
private information that auctioneers need in order to compute optimal
outcomes.  More recently, attention has turned to the complementary
goal of enabling bidders {\it not} to reveal private information that
auctioneers do {\it not} need in order to compute optimal outcomes.
The importance of bidders' privacy, like that of algorithmic efficiency,
has become clear now that many auctions are conducted online, and
Computer Science has become at least as relevant as Economics.

Our approach to privacy is based on communication complexity.
Although originally motivated by agents' privacy in mechanism
design, our definitions and tools can be applied to distributed
function computation in general.  Because perfect privacy can be
impossible or infeasibly costly to achieve, we investigate
approximate privacy.  Specifically, we formulate both worst-case and
average-case versions of the {\it privacy-approximation ratio} of a
function $f$ in order to quantify the amount of privacy that can be
maintained by parties who supply sensitive inputs to a distributed
computation of $f$. We also study the tradeoff between privacy
preservation and communication complexity.

Our points of departure are the work of Chor and
Kushilevitz~\cite{CK91} on characterization of privately computable
functions and that of Kushilevitz~\cite{K92} on the communication
complexity of private computation.  Starting from the same place,
Bar-Yehuda {\it et al.}~\cite{BCKO} also provided a framework in
which to quantify the amount of privacy that can be maintained in
the computation of a function and the communication cost of
achieving it. Their definitions and results are significantly different from the
ones we present here (see discussion in Appendix~\ref{apx:bar-yehuda}); as explained in
Section~\ref{section-discussion} below, a precise characterization
of the relationship between their formulation and ours is an
interesting direction for future work.

\subsection{Our Approach}

Consider an auction of a Bluetooth headset with $2$ bidders, $1$ and
$2$, in which the auctioneer accepts bids ranging from \$0 to \$7
in \$1 increments. Each bidder $i$ has a private value
$x_i\in\{0,\ldots,7\}$ that is the maximum he is willing to pay for
the headset. The item is sold in a $2^{nd}$-price Vickrey auction,
\emph{i.e.}, the \emph{higher} bidder gets the item (with ties
broken in favor of bidder $1$), and the price he pays is the
\emph{lower} bid. The demand for privacy arises naturally
in such scenarios~\cite{NPS99}: In a straightforward protocol, the
auctioneer receives sealed bids from both bidders and computes the
outcome based on this information. Say, \emph{e.g.}, that bidder $1$
bids \$3, and bidder $2$ bids \$6. The auctioneer sells the headset
to bidder $2$ for \$3. It would not be at all surprising however if,
in subsequent auctions of headsets in which bidder $2$ participates,
the same auctioneer set a reservation price of \$5. This could be
avoided if the auction protocol allowed the auctioneer to learn the
fact that bidder $2$ was the highest bidder (something he needs to
know in order to determine the outcome) but did not entail the full
revelation of $2$'s private value for the headset.

\begin{figure}[htp]
\begin{center}
\includegraphics[height=3.5in]{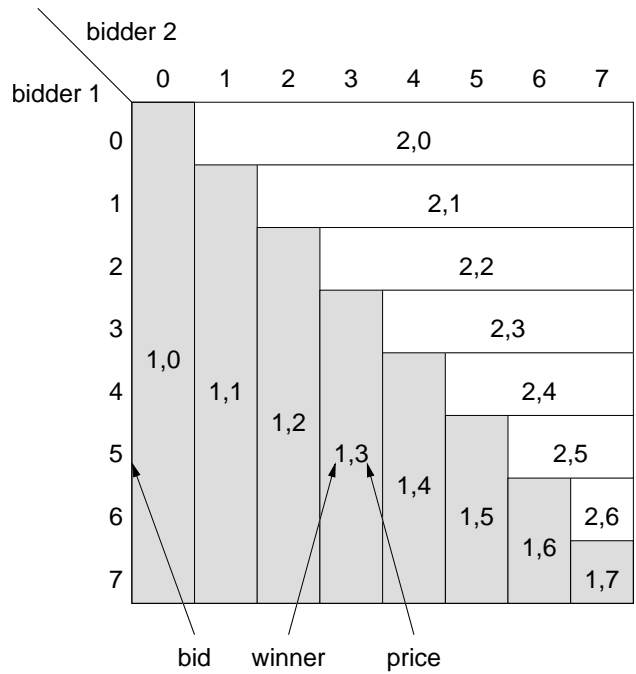}
\caption{\small The minimal knowledge requirements for
$2^{nd}$-price auctions}\label{fig:illustrate}
\end{center}
\end{figure}

Observe that, in some cases, revelation of the exact private
information of the highest bidder is necessary. For example,
if $x_1=6$, then bidder $2$ will win only if $x_2=7$. In other
cases, the revelation of a lot of information is necessary,
\emph{e.g.}, if bidder $1$'s bid is $5$, and bidder $2$ outbids him,
then $x_2$ must be either $6$ or $7$. An auction protocol is said
to achieve \emph{perfect objective privacy} if the auctioneer learns
nothing about the private information of the bidders that is not
needed in order to compute the result of the auction.
Figure~\ref{fig:illustrate} illustrates the information the auctioneer
\emph{must} learn in order to determine the outcome of the
$2^{nd}$-price auction described above. Observe that the auctioneer's failure to
distinguish between two potential pairs of inputs that belong to
different rectangles in Fig.~\ref{fig:illustrate} implies his
inability to determine the winner or the price the winner must pay.
Also observe, however, that the auctioneer need not be able to
distinguish between two pairs of inputs that belong to the
same rectangle.

Using the ``minimal knowledge requirements'' described in
Fig.~\ref{fig:illustrate}, we can now characterize a perfectly
(objective) privacy-preserving auction protocol as one that induces
this \emph{exact} partition of the space of possible inputs into
subspaces in which the inputs are indistinguishable to the
auctioneer. Unfortunately, perfect privacy is often hard or even
impossible to achieve. For $2^{nd}$-price auctions, Brandt and
Sandholm~\cite{BS} show that \emph{every} perfectly private auction
protocol has exponential communication complexity. This provides the
motivation for our definition of \emph{privacy-approximation ratio}:
We are interested in whether there is an auction protocol that
achieves ``good'' privacy guarantees without paying such a high
price in computational efficiency. We no longer insist that the
auction protocol induce a partition of inputs \emph{exactly} as in
Fig.~\ref{fig:illustrate} but rather that it ``approximate'' the
optimal partition well. We define two kinds of privacy-approximation
ratio (PAR): \emph{worst-case} PAR and \emph{average-case} PAR.

The worst-case PAR of a protocol $P$ for the $2^{nd}$-price auction
is defined as the maximum ratio between the size of a set $S$ of
indistinguishable inputs in Fig.~\ref{fig:illustrate} and the size
of a set of indistinguishable inputs induced by $P$ that is
contained in $S$. If a protocol is perfectly privacy preserving,
these sets are always the same size, and so the worst-case PAR is
$1$. If, however, a protocol fails to achieve perfect privacy, then at
least one ``ideal'' set of indistinguishable inputs \emph{strictly}
contains a set of indistinguishable inputs induced by the protocol.
In such cases, the worst-case PAR will be strictly higher than $1$.

Consider, \emph{e.g.}, the sealed-bid auction protocol in which both
bidders reveal their private information to the auctioneer, who then
computes the outcome. Obviously, this naive protocol enables the
auctioneer to distinguish between every two pairs of private
inputs, and so each set of indistinguishable inputs induced by the
protocol contains exactly one element. The worst-case PAR of
this protocol is therefore $\frac{8}{1}=8$. (If bidder $2$'s value
is $0$, then in Fig.~\ref{fig:illustrate} the auctioneer is unable to
determine which value in $\{0,\ldots,7\}$ is $x_1$. In the sealed
bid auction protocol, however, the auctioneer learns the exact value
of $x_1$.) The \emph{average-case} PAR is a natural Bayesian variant
of this definition: We now assume that the auctioneer has knowledge
of some market statistics, in the form of a probability
distribution over the possible private information of the bidders.
PAR in this case is defined as the average ratio and not as
the maximum ratio as before.

Thus, intuitively, PAR captures the effect of a protocol on the
privacy (in the sense of indistinguishability from other
inputs) afforded to protocol participants---it indicates the
factor by which, in the worst case or on average, using the
protocol to compute the function, instead of just being told
the output, reduces the number of inputs from which a given
input cannot be distinguished. To formalize and generalize the above intuitive definitions of PAR, we make
use of machinery from communication-complexity theory. Specifically,
we use the concepts of \emph{monochromaticity} and \emph{tilings} to
make formal the notions of sets of indistinguishable inputs and of
the approximability of privacy. We discuss other notions of
approximate privacy in Section~\ref{section-discussion}.

\subsection{Our Findings}

We present both upper and lower bounds on the privacy-approximation
ratio for both the millionaires problem and $2^{nd}$-price auctions
with $2$ bidders. Our analysis of these two environments takes place
within Yao's $2$-party communication model~\cite{Y79}, in which the
private information of each party is a $k$-bit string, representing
a value in $\{0,\ldots,2^k-1\}$. In the millionaires problem, the
two parties (the millionaires) wish to keep their private
information hidden \emph{from each other}. We refer to this
goal as the preservation of \emph{subjective} privacy. In
electronic-commerce environments, each party (bidder) often
communicates with the auctioneer via a secure channel, and so the
aim in the $2^{nd}$-price auction is to prevent \emph{a third
party} (the auctioneer), who is unfamiliar with any of the parties'
private inputs, from learning ``too much'' about the bidders. This
goal is referred to, in this paper, as the preservation of \emph{objective}
privacy.

Informally, for both the $2^{nd}$-price Vickrey auction and the
millionaires problem, we obtain the following results: We show that
not only is perfect privacy impossible or infeasibly costly to
achieve, but even close approximations of perfect privacy suffer
from the same lower bounds. By contrast, we show that, if the values
of the parties are drawn uniformly at random from
$\{0,\ldots,2^k-1\}$, then, for both problems, simple and natural
communication protocols have privacy-approximation ratios that are
linear in $k$ (\emph{i.e.}, logarithmic in the size of the space of
possible inputs). We conjecture
that this improved PAR is achievable for \emph{any} probability
distribution. The correctness of this conjecture would imply that,
no matter what beliefs the protocol designer may have about
the parties' private values, a protocol that achieves reasonable
privacy guarantees exists.

Importantly, our results for the $2^{nd}$-price Vickrey auction are
obtained by proving a more general result for a large family of
protocols for single-item auctions, termed ``\emph{bounded-bisection
auctions}'', that contains both the celebrated ascending-price
English auction and the class of bisection
auctions~\cite{GHM06,GHMV07}.

We show that our results for the millionaires problem also extend to
the classic economic problem of \emph{provisioning a public good},
by observing that, in terms of privacy-approximation ratios, the two
problems are, in fact, equivalent.

\subsection{Related Work: Defining Privacy-Preserving
Computation}

\subsubsection{Communication-Complexity-Based Privacy Formulations}

As explained above, the privacy work of Bar-Yehuda {\it et
al.}~\cite{BCKO} and the work presented in this paper have common
ancestors in \cite{CK91,K92}. Similarly, the work of Brandt and
Sandholm~\cite{BS} uses Kushilevitz's formulation to prove an
exponential lower bound on the communication complexity of
privacy-preserving $2^{nd}$-price Vickrey auctions. We elaborate on
the relation of our work to that of Bar-Yehuda {\it et
al.}~\cite{BCKO} in Appendix~\ref{apx:bar-yehuda}.

Similarly to~\cite{BCKO,CK91,K92}, our work focuses on the two-party
deterministic communication model. We view our results as first step in a more
general research agenda, outlined in Sec.~\ref{section-discussion}.

There are many formulations of privacy-preserving computation, both exact and approximate, that are not based on the
definitions and tools in~\cite{CK91,K92}. We now briefly review some of them and explain how they differ
from ours.

\subsubsection{Secure, Multiparty Function Evaluation}

The most extensively developed approach to privacy in distributed
computation is that of {\it secure, multiparty function evaluation}
(SMFE). Indeed, to achieve agent privacy in algorithmic mechanism
design, which was our original motivation, one could, in principle,
simply start with a strategyproof mechanism and then have the agents
themselves compute the outcome and payments using an SMFE protocol.
However, as observed by Brandt and Sandholm~\cite{BS}, these
protocols fall into two main categories, and both have inherent
disadvantages from the point of view of mechanism design:

\begin{itemize}
\item {\it Information-theoretically} private protocols, the study of which was
initiated by Ben-Or, Goldwasser, and Wigderson~\cite{BGW88} and
Chaum, Cr\'epeau, and Damgaard~\cite{CCD88}, rely on the assumption
that a constant fraction of the agents are ``honest'' (or
``obedient'' in the terminology of distributed algorithmic mechanism
design~\cite{FS02}), {\it i.e.}, that they follow the protocol
perfectly even if they know that doing so will lead to an outcome
that is not as desirable to them as one that would result from their
deviating from the protocol; clearly, this assumption is
antithetical to the main premise of mechanism design, which is that
all agents will behave strategically, deviating from protocols when
and only when doing so will improve the outcome from their points of
view;

\item Multiparty protocols that use {\it cryptography} to achieve
privacy, the study of which was initiated by
Yao~\cite{Yao-millionaires,Yao-general}, rely on (plausible but
currently unprovable) complexity-theoretic assumptions. Often, they
are also very communication-intensive (see, \emph{e.g.}, \cite{BS} for an explanation
of why some of the deficiencies of the Vickrey auction cannot be solved via cryptography).
Moreover, sometimes the deployment cryptographic machinery is infeasible (over the years,
many cryptographic variants of the current interdomain routing protocol, BGP, were proposed, but
not deployed due to the infeasibility of deploying a global Internet-wide PKI infrastructure
and the real-time computational cost of verifying signatures). For some mechanisms of interest,
efficient cryptographic protocols have been obtained (see, \eg, \cite{DHR00,NPS99}).
\end{itemize}

In certain scenarios, the demand for \emph{perfect} privacy preservation
cannot be relaxed. In such cases, if the function cannot be computed in a
privacy-preserving manner without the use of cryptography, there is no choice
but to resort to a cryptographic protocol. There is an extensive body of work
on cryptography-based identity protocols, and we are not offering our notion of PAR
as an extension of that work.  (In fact, the framework described here might be applied to SMFE protocols by replacing indistinguishability by computational indistinguishability.  However, this does not appear to yield any new insights.)

However, in other cases, we argue that privacy preservation
should be regarded as one of \emph{several design goals}, alongside
low computational/communication complexity, protocol
simplicity, incentive-compatibility, and more. Therefore, it is
necessary to be able to \emph{quantify} privacy preservation in order to
understand the tradeoffs among the different design goals, and
obtain ``reasonable'' (but not necessarily perfect) privacy guarantees.
Our PAR approach continues the long line of research about information-theoretic
notions of privacy, initiated by Ben-Or \etal.\ and by Chaum \etal.\
Regardless of the above argument, we believe that information-theoretic formulations of
privacy and approximate privacy are also natural to consider in
their own right.

\subsubsection{Private Approximations and Approximate Privacy}

In this paper, we consider protocols that compute exact results but
preserve privacy only approximately. One can also ask what it means
for a protocol to compute approximate results in a
privacy-preserving manner; indeed, this question has also been
studied~\cite{BCNW,FIMNSW,HKKN}, but it is unrelated to the
questions we ask here.  Similarly, definitions and techniques from
{\it differential privacy}~\cite{DiffPrivSurvey} (see also~\cite{GRS09}), in which the goal is to add noise
to the result of a database query in such a way as to preserve the
privacy of the individual database records (and hence protect the
data subjects) but still have the result convey nontrivial
information, are inapplicable to the problems that we study here.

\subsection{Paper Outline}

In the next section, we review and expand upon the connection
between perfect privacy and communication complexity. We present our
formulations of approximate privacy, both worst case and
average case, in Section~\ref{sec_approximate-privacy-def}; we present our
main results in Sections~\ref{section-bounds-I}
and~\ref{section-bounds-II}. Discussion and future directions can be
found in Section~\ref{section-discussion}.

\section{Perfect Privacy and Communication Complexity}\label{sec-model}

We now briefly review Yao's model of two-party communication and
notions of objective and subjective perfect privacy; see Kushilevitz
and Nisan~\cite{KN97} for a comprehensive overview of communication
complexity theory.  Note that we only deal with \emph{deterministic}
communication protocols. Our definitions can be extended to
randomized protocols.

\subsection{Two-Party Communication Model}\label{subsec_communication}

There are two parties, $1$ and $2$, each holding a $k$-bit
\emph{input string}. The input of party $i$, $x_i\in\{0,1\}^k$, is
the \emph{private information} of $i$. The parties communicate with
each other in order to compute the value of a function
$f:\{0,1\}^k\times \{0,1\}^k\rightarrow \{0,1\}^t$. The two parties
alternately send messages to each other. In communication round $j$, one of the parties sends a bit $q_j$ that is a
function of that party's input and the history $(q_1,\ldots,q_{j-1})$ of previously sent messages.  We say that a bit is \emph{meaningful} if it is not a constant function of this input and history and if, for every meaningful bit transmitted previously, there some combination of input and history for which the bit differs from the earlier meaningful bit.  Non-meaningful bits (\emph{e.g.}, those sent as part of protocol-message headers) are irrelevant to our work here and will be ignored.  A \emph{communication
protocol} dictates, for each party, when it is that party's turn to
transmit a message and what message he should transmit, based on the
history of messages and his value.

A communication protocol $P$ is said to compute $f$ if, for every
pair of inputs $(x_1,x_2)$, it holds that $P(x_1,x_2)=f(x_1,x_2)$. As
in~\cite{K92}, the last message sent in a protocol $P$ is assumed to
contain the value $f(x_1,x_2)$ and therefore may require up to $t$
bits. The \emph{communication complexity} of a protocol $P$ is the
maximum, over all input pairs, of the number of bits transmitted during the execution of $P$.

Any function $f:\{0,1\}^k\times \{0,1\}^k\rightarrow\{0,1\}^t$ can
be visualized as a $2^k\times 2^k$ matrix with entries in
$\{0,1\}^t$, in which the rows represent the possible inputs of
party $1$, the columns represent the possible inputs of party $2$,
and each entry contains the value of $f$ associated with its row and
column inputs. This matrix is denoted by $A(f)$.

\begin{definition}[Regions, partitions]
A \emph{region} in a matrix $A$ is any subset of entries in $A$ (not
necessarily a submatrix of $A$). A \emph{partition} of $A$ is a collection
of disjoint regions in $A$ whose union equals $A$.
\end{definition}

\begin{definition}[Monochromaticity]
A region $R$ in a matrix $A$ is called \emph{mono\-chro\-matic} if all
entries in $R$ contain the same value. A \emph{monochromatic
partition} of $A$ is a partition all of whose regions are
monochromatic.
\end{definition}

Of special interest in communication complexity are specific kinds
of regions and partitions called rectangles, and tilings,
respectively:

\begin{definition}[Rectangles, Tilings]\label{def:rect}
A \emph{rectangle} in a matrix $A$ is a submatrix of $A$. A
\emph{tiling} of a matrix $A$ is a partition of $A$ into rectangles.
\end{definition}

\begin{definition}[Refinements]
A tiling $T_1(f)$ of a matrix $A(f)$ is said to be a
\emph{refinement} of another tiling $T_2(f)$ of $A(f)$ if every
rectangle in $T_1(f)$ is contained in some rectangle in $T_2(f)$.
\end{definition}

Monochromatic rectangles and tilings are an important concept in
com\-mu\-ni\-ca\-tion-com\-plex\-i\-ty theory, because they are linked to the
execution of communication protocols. Every communication protocol
$P$ for a function $f$ can be thought of as follows:
\begin{enumerate}
\item Let $R$ and $C$ be the sets of row and column indices of $A(f)$, respectively.  For $R'\subseteq R$ and $C'\subseteq C$, we will abuse notation and write $R'\times C'$ to denote the submatrix of $A(f)$ obtained by deleting the rows not in $R'$ and the columns not in $C'$.

\item While $R\times C$ is not monochromatic:

\begin{itemize}

\item One party $i\in \{0,1\}$ sends a single bit $q$ (whose value
is based on $x_i$ and the history of communication).

\item If $i=1$, $q$ indicates whether $1$'s value is in one of
two disjoint sets $R_1,R_2$ whose union equals $R$. If $x_1\in R_1$,
both parties set $R=R_1$. If $x_1\in R_2$, both parties set $R=R_2$.

\item If $i=2$, $q$ indicates whether $2$'s value is in one of
two disjoint sets $C_1,C_2$ whose union equals $C$. If $x_2\in C_1$,
both parties set $C=C_1$. If $x_2\in C_2$, both parties set $C=C_2$.
\end{itemize}

\item One of the parties sends a last message (consisting of up to $t$ bits) containing the value in all
entries of the monochromatic rectangle $R\times C$.
\end{enumerate}

Observe that, for every pair of private inputs $(x_1,x_2)$, $P$
terminates at some monochromatic rectangle in $A(f)$ that contains
$(x_1,x_2)$. We refer to this rectangle as ``\emph{the monochromatic
rectangle induced by $P$ for $(x_1,x_2)$}''. We refer to the tiling
that consists of all rectangles induced by $P$ (for all pairs of
inputs) as  ``\emph{the monochromatic tiling induced by $P$}''.

\begin{figure}[htp]
\begin{center}
\includegraphics[width=1in]{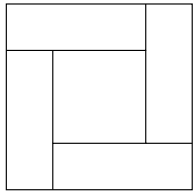}
\caption{\small A tiling that cannot be induced by any communication
protocol~\cite{K92}}\label{fig:nomech}
\end{center}
\end{figure}

\begin{rem}
There are monochromatic tilings that cannot be induced by communication
protocols. For example, observe that the tiling in
Fig.~\ref{fig:nomech} (which is essentially an example from~\cite{K92}) has this property.
\end{rem}

\subsection{Perfect Privacy}\label{subsec_perfect-privacy}

Informally, we say that a two-party protocol is \emph{perfectly
privacy-preserving} if the two parties (or a third party observing
the communication between them) cannot learn more from the execution
of the protocol than the value of the function the protocol
computes.  (These definition can be extended naturally to protocols involving more than two participants.)

Formally, let $P$ be a communication protocol for a function $f$.
The \emph{communication string} passed in $P$ is the concatenation
of all the messages $(q_1,q_2\ldots)$ sent in the course of the
execution of $P$. Let $s_{(x_1,x_2)}$ denote the communication
string passed in $P$ if the inputs of the parties are $(x_1,x_2)$.
We are now ready to define perfect privacy. The following two
definitions handle privacy from the point of view of a party $i$
that does not want the other party (who is, of course, familiar not
only with the communication string, but also with \emph{his own}
value) to learn more than necessary about $i$'s private information.
We say that a protocol is perfectly private with respect to party
$1$ if $1$ never learns more about party $2$'s private information
than necessary to compute the outcome.

\begin{definition} [Perfect privacy with respect to 1]~\cite{CK91,K92} \label{def_ppwrt1}
$P$ is \emph{perfectly private with respect to party $1$} if, for
every $x_2,x'_2$ such that $f(x_1,x_2)=f(x_1,x'_2)$, it holds that
$s_{(x_1,x_2)}=s_{(x_1,x'_2)}$.
\end{definition}

Informally, Def.~\ref{def_ppwrt1} says that party 1's knowledge of the communication
string passed in the protocol and his knowledge of $x_1$ do not
aid him in distinguishing between two possible inputs of $2$.
Similarly:
\begin{definition}[Perfect privacy with respect to 2]~\cite{CK91,K92}
$P$ is \emph{perfectly private with respect to party $2$} if, for
every $x_1,x'_1$ such that $f(x_1,x_2)=f(x'_1,x_2)$, it holds that
$s_{(x_1,x_2)}=s_{(x'_1,x_2)}$.
\end{definition}

\begin{obs}
For any function $f$, the protocol in which party $i$ reveals $x_i$
and the other party computes the outcome of the function is
perfectly private with respect to $i$.
\end{obs}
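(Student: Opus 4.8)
The plan is to unwind the definition of the communication string $s_{(x_1,x_2)}$ for this particular protocol and then check the condition of Definition~\ref{def_ppwrt1} directly. I treat the case $i=1$ (party $1$ reveals $x_1$); the case $i=2$ is symmetric, exchanging the roles of rows and columns.

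First I would make the communication string explicit. In the protocol under consideration, party $1$ speaks first, sending a sequence of bits that encode his value $x_1$ --- in the row-bisection view of Step~2 of the protocol-as-tiling description above, party $1$ repeatedly splits the surviving set $R$ of rows until it is the singleton $\{x_1\}$. Only then does party $2$, who now knows both $x_1$ and his own $x_2$, send the final message containing $f(x_1,x_2)$. Consequently $s_{(x_1,x_2)}$ is the concatenation of a prefix determined entirely by $x_1$ and a suffix equal to the encoding of $f(x_1,x_2)$.

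Next I would verify the privacy requirement. Fix $x_1$, and let $x_2,x_2'$ be arbitrary inputs of party $2$ satisfying $f(x_1,x_2)=f(x_1,x_2')$. The prefix of the two communication strings coincides because it is a function of $x_1$ alone and is sent before party $2$ transmits anything, so it carries no dependence on party $2$'s value. The suffix coincides because the two function values are equal by hypothesis. Hence $s_{(x_1,x_2)}=s_{(x_1,x_2')}$, which is precisely what Definition~\ref{def_ppwrt1} demands.

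I do not expect a substantive obstacle, since the statement is essentially definitional; the only point deserving care is to confirm that party $1$'s messages cannot secretly depend on $x_2$. This is immediate here, because party $1$ transmits his entire value before receiving any bit from party $2$, so the history on which his bits depend carries no information about $x_2$. Once this is noted, the equality of the strings is forced, and the symmetric argument for $i=2$ (with the column-bisection view) completes the proof.
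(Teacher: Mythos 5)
Your proof is correct and is exactly the definitional unwinding the paper has in mind: the paper states this as an Observation without proof precisely because, as you show, the communication string factors into a prefix depending only on $x_i$ and a suffix equal to $f(x_1,x_2)$, so the condition of Definition~\ref{def_ppwrt1} holds immediately. Your added care in noting that party $i$'s bits cannot depend on the other party's value (since they are sent before the other party transmits anything) is the right point to flag, and nothing more is needed.
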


\begin{definition} [Perfect subjective privacy]
$P$ achieves \emph{perfect subjective privacy} if it is perfectly
private with respect to both parties.
\end{definition}

The following definition considers a different form of privacy---privacy from \emph{a third party} that observes the communication
string but has no \emph{a priori} knowledge about the private
information of the two communicating parties. We refer to this
notion as ``\emph{objective privacy}''.

\begin{definition} [Perfect objective privacy]
$P$ achieves \emph{perfect objective privacy} if, for every two
pairs of inputs $(x_1,x_2)$ and $(x'_1,x'_2)$ such that
$f(x_1,x_2)=f(x'_1,x'_2)$, it holds that
$s_{(x_1,x_2)}=s_{(x'_1,x'_2)}$.
\end{definition}

Kushilevitz~\cite{K92} was the first to point out the interesting
connections between perfect privacy and communication-complexity theory. Intuitively, we can think of any monochromatic
rectangle $R$ in the tiling induced by a protocol $P$ as a set of
inputs that are \emph{indistinguishable} to a third party. This is
because, by definition of $R$, for any two pairs of inputs in $R$, the
communication string passed in $P$ must be the same. Hence we can
think of the privacy of the protocol in terms of the tiling induced
by that protocol.

Ideally, every two pairs of inputs that are assigned the same
outcome by a function $f$ will belong to the same monochromatic
rectangle in the tiling induced by a protocol for $f$. This
observation enables a simple characterization of perfect
privacy-preserving mechanisms.

\begin{definition} [Ideal monochromatic partitions]
A monochromatic region in a matrix $A$ is said to be a {\em maximal
monochromatic region} if no monochromatic region in $A$ properly
contains it. \emph{The ideal monochromatic partition} of $A$ is made
up of the maximal monochromatic regions.
\end{definition}

\begin{obs}
For every possible value in a matrix $A$, the maximal mono\-chro\-matic
region that corresponds to this value is unique. This implies the
uniqueness of the ideal monochromatic partition for $A$.
\end{obs}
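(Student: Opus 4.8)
The plan is to exploit the fact that a \emph{region} is allowed to be an \emph{arbitrary} subset of entries of $A$, not merely a submatrix; this freedom makes the observation almost immediate, because for each value the set of all entries carrying that value is itself a legitimate monochromatic region. Concretely, I would first fix a value $v$ occurring in $A$ and let $S_v$ be the set of all entries $(i,j)$ with $A_{ij}=v$. By construction $S_v$ is monochromatic, and any monochromatic region $R$ whose common value is $v$ satisfies $R\subseteq S_v$, since every entry of $R$ equals $v$ and hence lies in $S_v$.

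Next I would show that $S_v$ is the \emph{unique} maximal monochromatic region of value $v$. It is maximal, because any region properly containing it must include an entry of value different from $v$ and so cannot be monochromatic. It is the only maximal region of value $v$, because if $R$ is any maximal monochromatic region of value $v$, then $R\subseteq S_v$ by the previous step, and since $S_v$ is itself monochromatic this containment together with the maximality of $R$ forces $R=S_v$. This proves the first sentence of the observation.

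For the uniqueness of the ideal monochromatic partition, I would observe that it is exactly the collection $\{S_v\}$ as $v$ ranges over the values appearing in $A$: these sets are pairwise disjoint, since each entry has a single value, and their union is all of $A$, so they form a partition; and because each $S_v$ depends only on $v$, the collection is uniquely determined. The one conceptual point worth flagging---which I expect to be the only real subtlety, since no computation is involved---is the gap between ``maximal'' (an incomparability condition) and ``maximum''. The arbitrariness of regions collapses the two here: $S_v$ is simultaneously the unique maximal and the largest monochromatic region of its value, and it is precisely this collapse that drives both parts of the statement.
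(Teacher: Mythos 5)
Your proof is correct and matches the reasoning the paper implicitly relies on: the observation is stated without proof precisely because, since regions may be arbitrary subsets of entries, the set $S_v$ of all entries carrying value $v$ is itself a monochromatic region containing every other monochromatic region of that value, which forces it to be the unique maximal one. Your explicit note that maximality and maximumhood coincide here is the right point to flag, and the rest follows exactly as you say.
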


\begin{obs}\label{obs_privacy=maximal}\rm{\bf{(A characterization of perfectly privacy-preserving protocols)}}
A communication protocol $P$ for $f$ is perfectly privacy-preserving
iff the monochromatic til\-ing induced by $P$ is the ideal
monochromatic partition of $A(f)$. This holds for all of the above
notions of privacy.
\end{obs}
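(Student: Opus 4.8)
The plan is to establish a bijection between the ``leaves'' of the protocol and the rectangles of the induced tiling, and then read off each privacy condition as a constraint on that tiling. First I would record the bridge lemma implicit in the discussion preceding the statement: for any two input pairs $a=(x_1,x_2)$ and $b=(x'_1,x'_2)$, we have $s_a=s_b$ if and only if $a$ and $b$ lie in the same rectangle of the monochromatic tiling $T$ induced by $P$. One direction is already noted in the text (two pairs in a common induced rectangle are reached by the same sequence of bit-splits, hence produce the same communication string); for the converse I would observe that a communication string names a unique root-to-leaf path in the (deterministic) protocol, that distinct leaves are distinct, disjoint rectangles of $T$, and hence that equal strings force a common rectangle.

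With the bridge in hand I would prove the objective case, which I regard as the backbone of the statement. Perfect objective privacy says exactly that $f(a)=f(b)$ implies $s_a=s_b$, which by the bridge says that every two equal-valued entries share a rectangle of $T$; equivalently, each value class (the set of all entries carrying a fixed value) is contained in a single rectangle of $T$. Since $T$ is a monochromatic tiling, each of its rectangles is in turn contained in a single value class. Combining the two containments, the rectangles of $T$ are precisely the value classes. By the preceding observation on uniqueness, the value class of $v$ is the unique maximal monochromatic region for $v$, so $T$ coincides with the ideal monochromatic partition. The converse is immediate: if $T$ is the ideal partition, then each rectangle is a full value class, so equal-valued entries always share a rectangle and hence a string, giving objective privacy.

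For the remaining notions I would argue one implication directly and isolate the other as the delicate point. The ideal partition yields objective privacy, and objective privacy is merely the unrestricted version of the conditions defining privacy with respect to party $1$, with respect to party $2$, and subjective privacy (each is the same implication restricted to a fixed row, a fixed column, or their conjunction); hence ``$T$ is the ideal partition'' implies every notion of perfect privacy. For the converse under the subjective notions I would use privacy with respect to $1$ to show that, inside the rectangle $A\times B$ of $T$ containing an entry $(x_1,x_2)$ of value $v$, the column set must be all of $\{c : f(x_1,c)=v\}$, and symmetrically that privacy with respect to $2$ forces the row set to be all of $\{r : f(r,x_2)=v\}$; thus each rectangle is simultaneously row-maximal and column-maximal.

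The step I expect to be the main obstacle is upgrading this row/column maximality to global maximality, i.e.\ concluding that a rectangle maximal in every one of its rows and columns must equal the entire maximal monochromatic region of its value. This reduces to a connectivity claim: two equal-valued entries must be joinable by a chain of single-row and single-column moves passing only through equal-valued entries. When the maximal monochromatic region is itself a rectangle the chain exists and the argument closes cleanly; the same analysis, however, is exactly where one must be careful, since a value supported on, say, an anti-diagonal admits no such chain, and the corresponding region is not a rectangle and hence realizable by no protocol. I would therefore organize the write-up so that the objective equivalence carries the proof, deriving the remaining notions from the rectangle-by-rectangle maximality it forces, and treating the connectivity claim as the one genuinely subtle ingredient.
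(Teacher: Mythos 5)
The paper states this as an unproved observation, so there is no official argument to compare against; judged on its own terms, your bridge lemma (equal communication strings $\Leftrightarrow$ same induced rectangle) and your treatment of the objective case are correct and complete, and your one direction for the remaining notions (ideal partition $\Rightarrow$ every form of perfect privacy) is also fine.

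The gap you flag at the end is, however, a genuine and unbridgeable one: the connectivity claim you need is simply false, and with it the literal statement for the subjective notions. Take $f$ on $\{0,1\}\times\{0,1\}$ with $f(0,0)=f(1,1)=a$ and $f(0,1)=f(1,0)=b$. No two entries in the same row or column share a value, so the hypotheses of perfect privacy with respect to $1$ and with respect to $2$ are vacuous and \emph{every} protocol (including full revelation, which induces four singleton rectangles) is perfectly subjectively private; yet the ideal monochromatic partition consists of the two diagonals and is induced by no protocol. Your closing suggestion --- that an anti-diagonal value class is ``realizable by no protocol'' --- does not rescue the equivalence, because the left-hand side of the ``iff'' can still hold. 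What your row/column-maximality argument actually establishes is the correct characterization for the subjective notions: $P$ is perfectly private with respect to party $i$ iff the $i$-induced tiling of $P$ equals the $i$-ideal monochromatic partition (the row- or column-refinements defined in Section~\ref{sec_approximate-privacy-def}). You should stop there rather than attempt to upgrade to global maximality; only the objective notion is equivalent to the induced tiling coinciding with the unrefined ideal partition.
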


\section{Privacy-Approximation Ratios}\label{sec_approximate-privacy-def}

Unfortunately, perfect privacy should not be taken for granted. As
shown by our results, in many environments,
perfect privacy can be either impossible or very costly (in terms of
communication complexity) to obtain.
To measure a protocol's effect on privacy, relative to the
ideal---but perhaps impossible to implement---computation of the
outcome of a problem, we introduce the notion of
\emph{privacy-approximation ratios} (PARs).

\subsection{Worst-Case PARs}

For any communication protocol $P$ for a function $f$, we denote by
$R^P(x_1,x_2)$ the monochromatic rectangle induced by $P$ for
$(x_1,x_2)$. We denote by $R^I(x_1,x_2)$ the monochromatic region
containing $A(f)_{(x_1,x_2)}$ in the ideal monochromatic partition
of $A(f)$. Intuitively, $R^P(x_1,x_2)$ is the set of inputs that are
indistinguishable from $(x_1,x_2)$ to $P$. $R^I(x_1,x_2)$ is the set
of inputs that \emph{would be} indistinguishable from $(x_1,x_2)$ if
perfect privacy were preserved. We wish to asses how far one is from
the other. The size of a region $R$, denoted by $|R|$, is the
cardinality of $R$, \ie, the number of inputs in $R$.

We can now define worst-case \emph{objective} PAR as follows:
\begin{definition}[Worst-case objective PAR of $P$]
The \emph{worst-case objective privacy-ap\-prox\-i\-ma\-tion ratio} of
communication protocol $P$ for function $f$ is
\[
    \alpha=\max_{(x_1,x_2)}\ \frac{|R^I(x_1,x_2)|}{|R^P(x_1,x_2)|}.
\]

We say that $P$ is \emph{$\alpha$-objective-privacy-preserving in
the worst case}.
\end{definition}

\begin{definition} [$i$-partitions]
The \emph{$1$-partition} of a region $R$ in a matrix $A$ is the set of
disjoint rectangles $R_{x_1}=\{x_1\}\times\{x_2\ s.t.\
(x_1,x_2)\in R\}$ (over all possible inputs $x_1$).
\emph{$2$-partitions} are defined analogously.
\end{definition}

Intuitively, given any region $R$ in the matrix $A(f)$, if party
$i$'s actual private information is $x_i$, then $i$ can use this
knowledge to eliminate all the parts of $R$ other than $R_{x_i}$.
Hence, the other party should be concerned not with $R$ but rather
with the $i$-partition of $R$.

\begin{definition} [$i$-induced tilings]
The \emph{$i$-induced tiling} of a protocol $P$ is the refinement of the
tiling induced by $P$ obtained by $i$-partitioning each rectangle in
it.
\end{definition}

\begin{definition} [$i$-ideal monochromatic partitions]
The \emph{$i$-ideal monochromatic partition} is the refinement of the ideal
monochromatic partition obtained by $i$-partitioning each region in
it.\end{definition}

For any communication protocol $P$ for a function $f$, we use
$R_i^P(x_1,x_2)$ to denote the monochromatic rectangle containing
$A(f)_{(x_1,x_2)}$ in the $i$-induced tiling for $P$. We denote by
$R_i^I(x_1,x_2)$ the monochromatic rectangle containing
$A(f)_{(x_1,x_2)}$ in the $i$-ideal monochromatic partition of
$A(f)$.

\begin{definition}[Worst-case PAR of $P$ with respect to $i$]
\ The \emph{worst-case privacy-ap\-prox\-i\-ma\-tion ratio with respect to
$i$} of communication protocol $P$ for function $f$ is
\[
    \alpha=\max_{(x_1,x_2)}\ \frac{|R_i^I(x_1,x_2)|}{|R^P_i(x_1,x_2)|}.
\]

We say that $P$ is \emph{$\alpha$-privacy-preserving with respect to
$i$ in the worst case}.
\end{definition}

\begin{definition}[Worst-case subjective PAR of $P$]
The \emph{worst-case subjective privacy-ap\-prox\-i\-ma\-tion ratio} of
communication protocol $P$ for function $f$ is the maximum of the
worst-case privacy-approximation ratio with respect to each party.
\end{definition}

\begin{definition}[Worst-case PAR]
The \emph{worst-case objective (subjective) PAR for a function $f$}
is the minimum, over all protocols $P$ for $f$, of the worst-case objective (subjective) PAR of $P$.
\end{definition}

\subsection{Average-Case PARs}

As we shall see below, it is also useful to define an average-case
version of PAR. As the name suggests, the average-case objective PAR
is the \emph{average} ratio between the size of the monochromatic
rectangle containing the private inputs and the corresponding region
in the ideal monochromatic partition.

\begin{definition}[Average-case objective PAR of $P$]
Let $D$ be a probability distribution over the space of inputs. The
\emph{average-case objective privacy-approximation ratio} of
communication protocol $P$ for function $f$ is
\[
    \alpha = E_{D}\ [\frac{|R^I(x_1,x_2)|}{|R^P(x_1,x_2)|}].
\]

We say that $P$ is \emph{$\alpha$-objective privacy-preserving in
the average case with distribution $D$} (or \emph{with respect to
$D$}).
\end{definition}

We define average-case PAR with respect to $i$ analogously, and average-case subjective PAR as the maximum over all players $i$ of the average-case PAR with respect to $i$.
We define the \emph{average-case objective (subjective) PAR for a function
$f$} as the minimum, over all protocols $P$ for $f$, of the average-case objective (subjective) PAR of $P$.

\section{The Millionaires Problem and Public Goods: Bounds on PARs}\label{section-bounds-I}

In this section, we prove upper and lower bounds on the
privacy-approximation ratios for two classic problems: Yao's
millionaires problem and the provision of a public good.

\subsection{Problem Specifications}

\paragraph*{The millionaires problem.} Two millionaires want to know which one is richer.
Each millionaire's wealth is private information known only to him,
and the millionaire wishes to keep it that way. The goal is to
discover the identity of the richer millionaire while preserving the
(subjective) privacy of both parties.

\begin{definition}
[\twoMP]
\noindent \\
\noindent \underline{Input:} $x_1,x_2\in \{0,\ldots,2^k-1\}$ (each represented by a $k$-bit string)\\
\noindent \underline{Output:} the identity of the party with the
higher value, \emph{i.e.}, $\arg\max_{i\in \{0,1\}} x_i$ (breaking
ties lexicographically).
\end{definition}

There cannot be a perfectly privacy-preserving communication protocol for \twoMP
~\cite{K92}. Hence, we are interested in the PARs for this
well studied problem.

\paragraph*{The public-good problem.} There are two agents, each
with a private value in $\{0,\ldots,2^k-1\}$ that represents his
benefit from the construction of a public project (public good),
\emph{e.g.}, a bridge.\footnote{This is a discretization of the
classic public good problem, in which the private values are taken
from an interval of reals, as in~\cite{BN02,BBNS08}.} The goal of
the social planner is to build the public project only if the sum of
the agents' values is at least its cost, where, as in~\cite{BBNS08},
the cost is set to be $2^k-1$.

\begin{definition}
[\pgk]

\noindent \\
\noindent \underline{Input:} $x_1,x_2\in \{0,\ldots,2^k-1\}$ (each represented by a $k$-bit string)\\
\noindent \underline{Output:} ``Build'' if $x_1+x_2\geq 2^k-1$, ``Do
Not Build'' otherwise.
\end{definition}

It is easy to show (via
Observation~\ref{obs_privacy=maximal}) that for \pgk, as for \twoMP,
no perfectly privacy-preserving communication protocol exists.
Therefore, we are interested in the PARs for this problem.

\subsection{The Millionaires Problem}

The following theorem shows that not only is perfect subjective
privacy unattainable for \twoMP, but a stronger result holds:
\begin{thm} [A worst-case lower bound on subjective PAR]
\ \ No communication protocol for \twoMP has a worst-case subjective
PAR less than $2^{\frac{k}{2}}$.
\end{thm}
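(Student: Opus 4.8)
The plan is to exhibit, for an \emph{arbitrary} protocol $P$, a single pair of inputs at which the subjective PAR is already at least $2^{k/2}$; this sidesteps any global counting over the whole tiling. Write $n=2^k$ and view $A(f)$ as the $n\times n$ matrix whose $(x_1,x_2)$ entry records the winner, so the cells on which party $1$ wins form the lower-triangular region $\{(x_1,x_2):x_1\ge x_2\}$. The first thing I would record is a purely structural fact: if $R'\times C'$ is any monochromatic ``party-$1$-wins'' rectangle, then $r\ge c$ for every $r\in R'$ and $c\in C'$, so $\min R'\ge\max C'$. Applying this to the rectangle $R^P(x,x)=R'\times C'$ that $P$ induces for a diagonal input $(x,x)$, and using $x\in R'\cap C'$, forces $\min R'=\max C'=x$.

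The heart of the argument is to specialize to the top corner $x=n-1$. There $\min R'=n-1$ forces $R'=\{n-1\}$, so $P$'s induced rectangle for $(n-1,n-1)$ is a single row $\{n-1\}\times C'$; set $\gamma=|C'|$ and $c^\ast=\min C'$. I then read off the two party-relative ratios at well-chosen cells of this rectangle. Since the entire bottom row is a party-$1$ win, the row through $(n-1,n-1)$ in the $1$-ideal monochromatic partition has size $n$, while $R^P_1(n-1,n-1)=\{n-1\}\times C'$ has size $\gamma$; hence the PAR with respect to party $1$ is at least $n/\gamma$. For party $2$, I use the cell $(n-1,c^\ast)$, which lies in the \emph{same} rectangle: the column $c^\ast$ in the $2$-ideal partition has size $n-c^\ast$, whereas $R^P_2(n-1,c^\ast)=\{n-1\}\times\{c^\ast\}$ has size $1$, so the PAR with respect to party $2$ is at least $n-c^\ast$.

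Finally I would combine these with the containment $C'\subseteq\{c^\ast,\ldots,n-1\}$, which gives $\gamma\le n-c^\ast$. Thus the worst-case subjective PAR of $P$ is at least
\[
\max\!\left(\frac{n}{\gamma},\,n-c^\ast\right)\ \ge\ \max\!\left(\frac{n}{\gamma},\,\gamma\right)\ \ge\ \sqrt{n}\ =\ 2^{k/2},
\]
the last step being AM--GM: if $\gamma\ge\sqrt n$ the second term already suffices, and otherwise $n/\gamma>\sqrt n$. As $P$ was arbitrary, no protocol attains worst-case subjective PAR below $2^{k/2}$.

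The step I expect to be the real crux is the second one: recognizing that it suffices to inspect the extreme diagonal cell, where monochromaticity \emph{forces} one side of the induced rectangle down to a single row, and then observing that the one remaining degree of freedom---the width $\gamma$ of that row---is squeezed from both sides. A wide row makes party $1$'s ratio $n/\gamma$ small but drags $\min C'$ toward $0$, inflating party $2$'s ratio $n-c^\ast\ge\gamma$, and a narrow row does the reverse. Quantifying this trade-off by AM--GM is exactly what delivers the sharp constant $2^{k/2}$; by contrast, a more naive global area/counting argument over all diagonal rectangles only yields a bound of the right order but with a worse constant (roughly $2^{k/2}/\sqrt3$), so isolating the single forced-thin rectangle is essential. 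As a consistency check I would verify that the symmetric argument at the bottom corner $(0,0)$---where now the \emph{column} side is forced to a single entry---gives the same bound.
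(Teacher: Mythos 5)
Your proof is correct, and it takes a genuinely different route from the paper's. The paper argues dynamically: it walks down the protocol tree adversarially, always steering each party's candidate set toward the branch containing $0$, shows the protocol cannot terminate while both sets still contain $0$ and another value, and then does a case analysis on whether the row set or the column set collapses to $\{0\}$ (with a sub-case split on whether the surviving set has size above or below $2^{k/2}$). You instead argue statically about the induced tiling: monochromaticity alone forces the rectangle containing the corner $(2^k-1,2^k-1)$ to be a single row $\{2^k-1\}\times C'$, and the width $\gamma=|C'|$ is then squeezed between the party-$1$ ratio $2^k/\gamma$ and the party-$2$ ratio $2^k-\min C'\ge\gamma$, so AM--GM closes the argument. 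Each step of yours checks out: the structural fact $\min R'\ge\max C'$ for a party-$1$-wins rectangle, the identification of the ideal $1$-partition piece in row $2^k-1$ (size $2^k$) and the ideal $2$-partition piece in column $c^\ast$ (size $2^k-c^\ast$), and the final maximum. What your version buys is brevity and extra generality --- it never uses the round structure of communication, so it lower-bounds the subjective PAR of \emph{any} monochromatic tiling of $A(f)$, realizable by a protocol or not; your symmetric check at $(0,0)$ recovers exactly the configuration the paper reaches in its Case II. What the paper's dynamic argument buys is an explicit adversarial input construction and, in its Case I (row set collapsing to $\{0\}$), a stronger bound of $2^k$ on that branch; the final constant $2^{k/2}$ is the same either way.
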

\begin{proof}
Consider a communication protocol $P$ for \twoMP. Let $R$ represent
the space of possible inputs of millionaire $1$, and let $C$
represent the space of possible inputs of millionaire $2$. In the
beginning, $R=C=\{0,\ldots,2^k-1\}$. Consider the first (meaningful)
bit $q$ transmitted in course of $P$'s execution. Let us assume that
this bit is transmitted by millionaire $1$. This bit indicates
whether $1$'s value belongs to one of two disjoint subsets of $R$,
$R_1$ and $R_2$, whose union equals $R$. Because we are interested in the worst case, we can choose adversarially to
which of these subsets $1$'s input belongs. Without loss of
generality, let $0\in R_1$. We decide adversarially that $1$'s value
is in $R_1$ and set $R=R_1$. Similarly, if $q$ is transmitted by
millionaire $2$, then we set $C$ to be the subset of $C$ containing
$0$ in the partition of $2$'s inputs induced by $q$. We continue
this process recursively for each bit transmitted in $P$.

Observe that, as long as both $R$ and $C$ contain at least two values,
$P$ is incapable of computing \twoMP. This is because $0$ belongs to
both $R$ and $C$, and so $P$ cannot eliminate, for either of the
millionaires, the possibility that that millionaire has a value of $0$
and the other millionaire has a positive value. Hence, this process
will go on until $P$ determines that the value of one of the
millionaires is exactly $0$, \emph{i.e.}, until either $R=\{0\}$ or
$C=\{0\}$. Let us examine these two cases:
\begin{itemize}
\item {\bf Case I: $R=\{0\}$.} Consider the subcase in which
$x_2$ equals $0$.  Recall that $0\in C$, and so this is possible.
Observe that, in this case, $P$ determines the exact value of $x_1$,
despite the fact that, in the $2$-ideal-monochromatic partition, all $2^k$
possible values of $x_1$ are in the same monochromatic rectangle when
$x_2=0$ (because for all these values $1$ wins). Hence, we get a
lower bound of $2^k$ on the subjective privacy-approximation ratio.

\item {\bf Case II: $C=\{0\}$.} Let $m$ denote the highest input in $R$.
We consider two subcases.  If $m\leq 2^{\frac{k}{2}}$, then observe
that the worst-case subjective privacy-approximation ratio is at
least $2^{\frac{k}{2}}$.  In the $2$-ideal-monochromatic partition,
all $2^k$ possible values of $x_1$ are in the same monochromatic
rectangle if $x_2=0$, and the fact that $m\leq 2^{\frac{k}{2}}$
implies that $|R|\leq 2^{\frac{k}{2}}$.

If, on the other hand, $m>2^{\frac{k}{2}}$, then consider the case
in which $x_1=m$ and $x_2=0$. Observe that, in the
$1$-ideal-monochromatic partition, all values of millionaire $2$ in
$\{0,\ldots,m-1\}$ are in the same monochromatic rectangle if
$x_1=m$. However, $P$ will enable millionaire $1$ to determine that millionaire $2$'s value
is exactly $0$. This implies a lower bound of $m$ on the subjective
privacy-approximation. We now use the fact that $m>2^{\frac{k}{2}}$
to conclude the proof.
\end{itemize}
\end{proof}

By contrast, we show that fairly good privacy guarantees can be
obtained in the average case. We define the \bp for \twoMP as
follows: Ask each millionaire whether his value lies in
$[0,2^{k-1})$ or in $[2^{k-1},2^k)$; continue this binary search
until the millionaires' answers differ, at which point we know which
millionaire has the higher value. If the answers never differ the
tie is broken in favor of millionaire $1$.

We may exactly compute the average-case subjective PAR with respect to the uniform distribution for the \bp applied to \twoMP.  Figure~\ref{fig:mp-bp-spar} illustrates the approach.  The far left of the figure shows the ideal partition (for $k=3$) of the value space for \twoMP; these regions are indicated with heavy lines in all parts of the figure.  The center-left shows the $1$-partition of the regions in the ideal partition; the center-right shows the $1$-induced tiling that is induced by the \bp.  The far right illustrates how we may rearrange the tiles that partition the bottom-left region in the ideal partition (by reflecting them across the dashed line) to obtain a tiling of the value space that is the same as the tiling induced by applying the \ba to \twoSPA.
\begin{figure*}[htp]
\begin{center}
\includegraphics[width=4.75in]{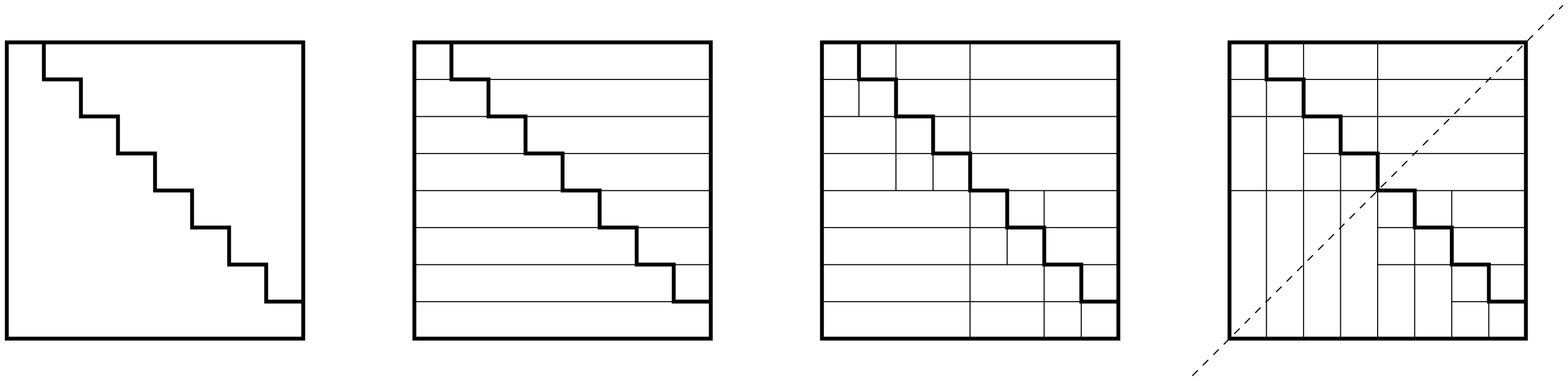}
\caption{Left to right: The ideal partition (for $k=3$) for \twoMP; the $1$-partition of the ideal regions; the $1$-induced tiling induced by the \bp; the rearrangement used in the proof of Thm.~\ref{thm:mp-bp-spar}}\label{fig:mp-bp-spar}
\end{center}
\end{figure*}

\begin{thm}\label{thm:mp-bp-spar}\rm{\bf{The average-case subjective PAR of the bisection protocol}}
The av\-er\-age-case subjective PAR with respect
to the uniform distribution for the \bp applied to \twoMP is
$\frac{k}{2}+1$.
\end{thm}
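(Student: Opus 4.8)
The plan is to compute the average-case PAR with respect to each party directly from the definition and then take the maximum. First I would pin down the ideal monochromatic partition of $A(f)$ for \twoMP: since the output depends only on the sign of $x_1-x_2$ (ties to party~$1$), the ideal partition has exactly two regions, the lower triangle $\{x_1\ge x_2\}$ (party~$1$ wins) and the strict upper triangle $\{x_2>x_1\}$ (party~$2$ wins). Slicing these by rows gives the $1$-ideal partition, so that $|R_1^I(x_1,x_2)| = x_1+1$ when $x_1\ge x_2$ and $|R_1^I(x_1,x_2)| = 2^k-1-x_1$ when $x_2>x_1$; slicing by columns gives the analogous $2$-ideal sizes $2^k-x_2$ (party~$1$ wins) and $x_2$ (party~$2$ wins).

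Next I would describe the tiling induced by the \bp. Running binary search on both values reveals the longest common prefix of the two $k$-bit strings together with the first bit at which they differ, while all lower-order bits stay hidden. Hence, if $x_1$ and $x_2$ first differ in bit position $\ell$, the induced rectangle fixes the top $\ell$ bits of each coordinate and leaves the bottom $k-\ell$ free, so it is a $2^{k-\ell}\times 2^{k-\ell}$ square; the diagonal $x_1=x_2$ yields singleton tiles. After $i$-partitioning, $|R_i^P(x_1,x_2)| = 2^{k-\ell}$ off the diagonal and $1$ on it.

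The crux is a cancellation. Fix $x_1$ and group the columns $x_2$ by the position $\ell$ of the first bit in which they differ from $x_1$. Each such group has exactly $2^{k-\ell}$ elements, which is precisely the common size $|R_1^P|$ of the whole group; moreover, because the first differing bit determines the order, every $x_2$ in the group lies on the same side of $x_1$, so $|R_1^I|$ is constant on the group, equal to $x_1+1$ if bit $\ell$ of $x_1$ is $1$ and to $2^k-1-x_1$ if it is $0$. The factor $2^{k-\ell}$ therefore cancels and the group contributes exactly its common $|R_1^I|$ value, while the diagonal cell contributes $x_1+1$. Summing over $\ell$, the row-$x_1$ total becomes $p(x_1)(x_1+1)+(k-p(x_1))(2^k-1-x_1)+(x_1+1)$, where $p(x_1)$ counts the $1$-bits of $x_1$. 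It then remains to sum over $x_1$ using the standard identities $\sum_x p(x)=k2^{k-1}$ and $\sum_x p(x)x = 2^{k-2}(k+1)(2^k-1)$, and to use the bit-flip involution $x\mapsto 2^k-1-x$ (which sends $p(x)$ to $k-p(x)$) to evaluate the value-$2$ contribution in terms of the value-$1$ sums. This gives $\sum_{x_1,x_2}|R_1^I|/|R_1^P| = 2^{2k-1}(k+2)$, and dividing by $2^{2k}$ yields $\frac{k}{2}+1$ for party~$1$. The identical bookkeeping with rows and columns interchanged, and the same involution, gives $\frac{k}{2}+1$ for party~$2$, so the subjective PAR is their maximum, $\frac{k}{2}+1$. (This is consistent with the rearrangement of Fig.~\ref{fig:mp-bp-spar}, which only permutes tiles within a fixed ideal region and hence preserves the sum being computed.)

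The main obstacle is not any single step but assembling the cancellation correctly: one must verify both that each first-differing-bit group has size exactly $2^{k-\ell}$ and that $|R_1^I|$ is genuinely constant on it (this is where the order-determining property of the first differing bit is essential), and one must treat the diagonal tie-cells separately so they are neither dropped nor double-counted. Once the per-row sum is in the form above, the remainder is the routine evaluation of the two bit-weighted sums; the reflection $x\mapsto 2^k-1-x$ is what makes the party-$1$ and party-$2$ totals come out equal despite the asymmetric tie-breaking.
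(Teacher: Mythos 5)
Your proof is correct, but it takes a genuinely different route from the paper's. The paper's proof is geometric and indirect: it reflects the tiles of the $i$-induced tiling in which player $i$ wins across the main diagonal, observes that the resulting tiling and tile-size ratios coincide exactly with those arising in the average-case \emph{objective} PAR computation for \twoSPA under the \ba, and then invokes the recursive computation of Thm.~\ref{thm:2spa-par-bba} with $g(k)=k$ (the recursions for $a_{c,i}$ and $b_{c,i}$). You instead compute the sum directly: you identify the protocol-induced rectangles as products of first-differing-bit prefix classes, exploit the cancellation that each such group of columns has size exactly equal to the row-slice $|R_1^P|$ and carries a constant $|R_1^I|$, and reduce everything to the bit-counting identities $\sum_x p(x)=k2^{k-1}$ and $\sum_x p(x)x = 2^{k-2}(k+1)(2^k-1)$; I have checked that both per-party sums come out to $2^{2k-1}(k+2)$, giving $\frac{k}{2}+1$. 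Your approach buys self-containedness (no dependence on the bounded-bisection machinery) and makes the mechanism of the cancellation explicit, and it also reveals that the PARs with respect to the two parties are individually equal, not merely that their maximum is $\frac{k}{2}+1$; the paper's approach buys economy of effort by reusing a computation it needs anyway and situates the result within the one-parameter family of bounded-bisection auctions. Your parenthetical remark that the diagonal reflection permutes tiles within a fixed ideal region correctly explains why the two arguments must agree.
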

\begin{proof}
Given a value of $i$, consider the $i$-induced-tiling
obtained by running the \bp for \twoMP (as in the center-right of Fig.~\ref{fig:mp-bp-spar} for $i=1$).
Rearrange the rectangles in which player $i$ wins by reflecting them
across the line running from the bottom-left corner to the top-right
corner (the dashed line in the far right of Fig.~\ref{fig:mp-bp-spar}).  This produces a tiling of the value space in which the
region in which player $1$ wins is tiled by tiles of width $1$, and
the region in which player $2$ wins is tiled by tiles of height $1$;
in computing the average-case-approximate-privacy with respect to
$i$, the tile-size ratios that we use are the heights (widths) of
the tiles to the height (width) of the tile containing all values in
that column (row) for which player $1$ ($2$) wins.  This tiling and
the tile-size ratios in question are exactly as in the computation
of the average-case \emph{objective} privacy for \twoSPA; the
argument used in Thm.~\ref{thm:2spa-par-bba} (for $g(k)=k$) below completes the proof.
\end{proof}

Consider the case in which a third party is observing the
interaction of the two millionaires. How much can this observer
learn about the private information of the two millionaires? We show
that, unlike the case of subjective privacy, good PARs are
unattainable even in the average case.

Because the values $(i,i)$ (in which case player $1$ wins) and the values $(i,i+1)$ (in which player $2$ wins) must all appear in different tiles in any tiling that refines the ideal partition of the value space for \twoMP, any such tiling must include at least $2^k$ tiles in which player $1$ wins
and $2^k-1$ tiles in which player $2$ wins.
The total contribution of a tile in which player $1$ wins is the
number of values in that tile times the ratio of the ideal region
containing the tile to the size of the tile, divided by the total
number ($2^{2k}$) of values in the space.  Each tile in which player
$1$ wins thus contributes $\frac{(1+2^k)2^k}{2^{2k+1}}$ to the
average-case PAR under the uniform
distribution; similarly, each tile in which player $2$ wins
contributes $\frac{2^k(2^k-1)}{2^{2k+1}}$ to this quantity. This leads
directly to the following result.
\begin{prop}[A lower bound on average-case objective PAR]
The average-case objective PAR for \twoMP with respect to the uniform distribution is at least $2^k - \frac{1}{2} + 2^{-(k+1)}$.
\end{prop}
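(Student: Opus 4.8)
The plan is to reduce the computation of the average-case objective PAR to a problem of \emph{counting tiles}, and then to bound those tile counts from below using the rectangular structure of protocol-induced tilings. The starting observation is that the ideal monochromatic partition of $A(f)$ for \twoMP consists of exactly two regions: the ``player $1$ wins'' region $\{(x_1,x_2):x_1\geq x_2\}$, of size $\frac{(2^k+1)2^k}{2}$, and the ``player $2$ wins'' region $\{(x_1,x_2):x_1<x_2\}$, of size $\frac{(2^k-1)2^k}{2}$. Hence $|R^I(x_1,x_2)|$ takes only these two values, determined by the winner at $(x_1,x_2)$. Since the tiling induced by any protocol $P$ is monochromatic, each of its tiles lies inside one of these two ideal regions, so $|R^I_T|$ (the ideal-region size for the winner in tile $T$) is well defined.

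The key simplification is that, for any protocol $P$, the contribution of a single tile to the unnormalized sum $\sum_{(x_1,x_2)}\frac{|R^I(x_1,x_2)|}{|R^P(x_1,x_2)|}$ is \emph{independent of that tile's size}. Fix a tile $T$; since $T$ is monochromatic and contained in one ideal region, both $|R^I|$ and $|R^P|=|T|$ are constant across the $|T|$ entries of $T$, so $T$ contributes $|T|\cdot\frac{|R^I_T|}{|T|}=|R^I_T|$. Summing over tiles and dividing by $2^{2k}$, the average-case objective PAR of $P$ equals $\frac{1}{2^{2k}}\bigl(N_1\cdot\frac{(2^k+1)2^k}{2}+N_2\cdot\frac{(2^k-1)2^k}{2}\bigr)$, where $N_1$ and $N_2$ are the numbers of player-$1$ and player-$2$ tiles. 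Thus minimizing the PAR over all protocols amounts to minimizing $N_1$ and $N_2$ over all tilings refining the ideal partition.

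Next I would lower-bound $N_1$ and $N_2$ using the fact that tiles are rectangles (submatrices). For $N_1$, consider the $2^k$ diagonal entries $(i,i)$, which are all player-$1$ wins; if two of them, say $(i,i)$ and $(j,j)$ with $i<j$, shared a rectangle $R'\times C'$, then $(i,j)\in R'\times C'$ would be a player-$2$ win inside a player-$1$ tile, contradicting monochromaticity. Hence $N_1\geq 2^k$. Symmetrically, for $N_2$ consider the $2^k-1$ superdiagonal entries $(i,i+1)$, all player-$2$ wins; two of them sharing a rectangle would force a player-$1$ entry $(j,i+1)$ into a player-$2$ tile, so $N_2\geq 2^k-1$.

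Substituting $N_1\geq 2^k$ and $N_2\geq 2^k-1$ into the per-protocol formula and simplifying (combining over the common denominator $2\cdot 2^k$) yields $2^k-\frac{1}{2}+2^{-(k+1)}$, and since this holds for every protocol it bounds the function's PAR from below. I expect the tile-counting step to be the main obstacle: it requires identifying precisely which collinear families of entries are forced to be pairwise separated by monochromaticity of \emph{rectangles}, and confirming that the diagonal and superdiagonal are the relevant (and essentially tight) separators. Once the per-tile contribution is seen to be size-independent, so that the problem collapses to counting tiles, the remainder is routine arithmetic.
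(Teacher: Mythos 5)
Your proposal is correct and follows essentially the same route as the paper: reduce the average-case PAR to counting protocol-induced tiles (each tile contributing $|R^I_T|$ to the unnormalized sum), then lower-bound the number of player-$1$ and player-$2$ tiles by $2^k$ and $2^k-1$ via the diagonal entries $(i,i)$ and superdiagonal entries $(i,i+1)$, which must lie in distinct rectangles. Your explicit rectangle-monochromaticity argument for why these entries are pairwise separated is exactly the justification the paper leaves implicit, and the final arithmetic matches.
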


There are numerous different tilings of the value space that achieve
this ratio and that can be realized by communication protocols.  For
the \bp, we obtain the same exponential (in $k$) growth rate but
with a larger constant factor.

\begin{prop}\rm{\bf{(The average-case objective PAR of the bisection protocol)}}
The \bp for \twoMP obtains an average-case objective
PAR of $3\cdot 2^{k-1} - \frac{1}{2}$ with
respect to the uniform distribution.
\end{prop}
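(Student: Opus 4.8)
My plan is to explicitly describe the monochromatic tiling induced by the \bp on $A(f)$ for \twoMP, and then directly sum the per-tile contributions to the average-case objective PAR, using the uniform distribution. The key structural fact to establish first is the shape of the tiling: running the binary search on both millionaires' values until their answers first differ corresponds exactly to comparing $x_1$ and $x_2$ bit-by-bit from the most significant bit down. The induced monochromatic rectangles are therefore indexed by the length of the common prefix of $x_1$ and $x_2$ and by the values of that shared prefix. Concretely, after agreeing on a prefix that fixes a dyadic subinterval of length $2^{k-j}$ and then diverging at the next bit, the protocol isolates a rectangle of the form (high-half $\times$ low-half) or (low-half $\times$ high-half), each of dimensions $2^{k-j-1}\times 2^{k-j-1}$; the diagonal tie-break contributes the degenerate singleton tiles on the main diagonal where the answers never differ. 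I would first verify this description carefully, since everything downstream rests on it.

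Next I would compute, for each such rectangle, its contribution to the expected ratio $E_D[\,|R^I|/|R^P|\,]$. For a tile $R^P$ produced after a divergence at depth $j$, the tile has $|R^P| = 2^{2(k-j-1)}$ entries, and I must identify the ideal monochromatic region $R^I$ containing it. Since the output is merely the identity of the winner, $R^I$ is the maximal monochromatic region for that winner; for the region where player $1$ wins, these are the cells below the diagonal, and likewise above for player $2$. Using the per-tile contribution formula already set up in the excerpt preceding the statement — each contributing (number of values in the tile) $\times$ (size of ideal region / size of tile) / $2^{2k}$ — the off-diagonal tiles at depth $j$ contribute a term proportional to the size of the ideal region divided by $2^{2k}$, summed with multiplicity equal to the number of divergence points at that depth. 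The main computational step is then to sum a geometric-type series over $j$ from $0$ to $k-1$, tracking the number of tiles at each level (which doubles as $j$ increases) against the shrinking tile sizes, and to add the $2^k$ diagonal singleton contributions separately.

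The closed form I expect is $3\cdot 2^{k-1} - \tfrac12$, and the structure of that answer tells me what to aim for: it is essentially $\tfrac32$ times the lower-bound quantity $2^k - \tfrac12 + 2^{-(k+1)}$ up to lower-order corrections, reflecting that the \bp tiles are typically square ($2^{k-j-1}\times 2^{k-j-1}$) rather than the thin width-$1$ or height-$1$ optimal tiles of the ideal-achieving protocol, so each tile is ``too large'' in exactly one dimension yet the ideal region it sits in is proportionally constrained. I would therefore organize the sum so that the dominant $3\cdot 2^{k-1}$ term emerges from the coarse top-level tiles (small $j$, which dominate both in area and in the ratio $|R^I|/|R^P|$) and verify that the deeper levels and the diagonal contribute only the $-\tfrac12$ constant correction.

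The main obstacle, and the step I would be most careful with, is the bookkeeping of the ideal-region sizes $|R^I|$ for tiles at each depth: the maximal monochromatic region for ``player $1$ wins'' is a fixed triangular region of the matrix, but a given \bp tile meets this region differently depending on its depth and position, so I must confirm that every off-diagonal tile at depth $j$ lies entirely within a single ideal region of the \emph{same} size, and correctly count how many such tiles there are. If the induced-tiling description in the first step is pinned down precisely, this reduces to summing $\sum_{j=0}^{k-1} 2\cdot 2^{j}$ (the count of off-diagonal tiles) against appropriately weighted ratio terms, and the remaining work is the routine geometric summation that collapses to the stated constant.
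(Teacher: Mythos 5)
Your overall route is the paper's: describe the monochromatic tiling induced by the \bp, apply the per-tile contribution formula (which reduces to $|R^I(R)|/2^{2k}$ per protocol-induced tile $R$), and sum over tiles. Your description of the tiling is correct --- $2\cdot 2^j$ off-diagonal rectangles of size $2^{k-j-1}\times 2^{k-j-1}$ for each divergence depth $j\in\{0,\dots,k-1\}$, plus $2^k$ diagonal singletons --- and so are the resulting counts: $2^{k+1}-1$ tiles in which player $1$ wins and $2^k-1$ in which player $2$ wins, which is exactly what the paper's one-line proof asserts.

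However, one step of your plan as stated would fail, and it stems from a misconception worth fixing. Since the output of \twoMP is only the winner's identity, there are exactly two ideal regions, of sizes $2^k(2^k+1)/2$ and $2^k(2^k-1)/2$, and the contribution of a tile is $|R|\cdot\frac{|R^I|}{|R|}\cdot 2^{-2k}=|R^I|\cdot 2^{-2k}$: the tile's own size cancels. Every player-$1$ tile contributes the same $\frac{2^k(2^k+1)}{2^{2k+1}}\approx\frac12$ and every player-$2$ tile the same $\frac{2^k(2^k-1)}{2^{2k+1}}\approx\frac12$, regardless of depth, so the sum is governed purely by tile \emph{counts}, and it is the \emph{fine} tiles that dominate, not the coarse ones: the $2^k$ diagonal singletons alone contribute $2^{k-1}+\frac12$, the $2^k$ depth-$(k-1)$ off-diagonal singletons contribute another $2^{k-1}$, while the two top-level tiles contribute exactly $1$. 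Hence your plan to organize the sum so that ``the dominant $3\cdot 2^{k-1}$ term emerges from the coarse top-level tiles'' and to ``verify that the deeper levels and the diagonal contribute only the $-\frac12$ constant correction'' cannot be carried out; the same misconception shows up in your heuristic that the \bp tiles are ``too large in one dimension.'' With this corrected, your bookkeeping collapses to $(2^{k+1}-1)\frac{2^k(2^k+1)}{2^{2k+1}}+(2^k-1)\frac{2^k(2^k-1)}{2^{2k+1}}=3\cdot 2^{k-1}-\frac12$, which is precisely the paper's argument.
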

\begin{proof}
The bisection mechanism induces a tiling that refines the ideal
partition and that has $2^{k+1}-1$ tiles in which the player $1$
wins and $2^k-1$ tiles in which the player $2$ wins.  The
contributions of each of these tiles is as noted above, from which
the result follows.
\end{proof}

Finally, Table~\ref{tab:twomp} summarizes our average-case PAR results (with respect to the uniform distribution) for \twoMP.
\begin{table}[htp]
\begin{center}
\begin{tabular}{|c|c|c|}
  \hline
   & Average-Case Obj.\ PAR & Average-Case Subj.\ PAR \\ \hline
  Any Protocol & $\geq 2^k - \frac{1}{2} + 2^{-(k+1)}$ &  \\ \hline
  Bisection Protocol & $\frac{3}{2}2^k - \frac{1}{2}$ & $\frac{k}{2}+1$ \\
  \hline
\end{tabular}
\end{center}
\caption{Average-case PARs for \twoMP}\label{tab:twomp}
\end{table}

\subsection{The Public-Good Problem}\label{ssec:pg}

The government is considering the construction of a bridge (a public good) at cost $c$.  Each taxpayer has a $k$-bit private value that is the utility he would gain from the bridge if it were built.  The government wants to build the bridge if and only if the sum of the taxpayers' private values is at least $c$.  In the case that $c=2^k-1$, we observe that $\hat{x}_2 = c-x_2$ is again a $k$-bit value and that $x_1+x_2 \geq c$ if and only if $x_1 \geq \hat{x}_2$; from the perspective of PAR, this problem is equivalent to solving \twoMP\ on inputs $x_1$ and $\hat{x}_2$.  We may apply our results for \twoMP to see that the public-good problem with $c=2^k-1$ has exponential average-case objective PAR with respect to the uniform distribution.  Appendix~\ref{ap:tpg} discusses average-case objective PAR for a truthful version of the public-good problem.

\section{$2^{nd}$-Price Auctions: Bounds on PARs}\label{section-bounds-II}

In this section, we present upper and lower bounds on the
privacy-approximation ratios for the $2^{nd}$-price Vickrey auction.

\subsection{Problem Specification}

\paragraph*{$2^\mathrm{nd}$-price Vickrey auction.} A single item
is offered to $2$ bidders, each with a private value for the item.
The auctioneer's goal is to allocate the item to the bidder with the
highest value. The fundamental technique in mechanism design for
inducing truthful behavior in single-item auctions is Vickrey's
$2^{nd}$-price auction~\cite{Vic61}: Allocate the item to the
highest bidder, and charge him the second-highest bid.

\begin{definition}
[\twoSPA]
\noindent \\
\noindent \underline{Input:} $x_1,x_2\in \{0,\ldots,2^k-1\}$  (each represented by a $k$-bit string)\\
\noindent \underline{Output:} the identity of the party with the
higher value, \emph{i.e.}, $\arg\max_{i\in \{0,1\}} x_i$ (breaking
ties lexicographically), and the private information of the of the
other party.
\end{definition}

Brandt and Sandholm~\cite{BS} show that a perfectly
privacy-preserving communication protocol exists for \twoSPA.
Specifically, perfect privacy is obtained via the
\emph{ascending-price English auction}: Start with a price of $p=0$
for the item. In each time step, increase $p$ by $1$ until one of
the bidders indicates that his value for the item is less than $p$ (in each step first asking bidder $1$ and then, if necessary, asking bidder $2$).  At that point, allocate the item to the other bidder for a price of
$p-1$. If $p$ reaches a value of $2^k-1$ (that is, the values of
both bidders are $2^k-1$) allocate the item to bidder $1$ for a
price of $2^k-1$.

Moreover, it is shown in~\cite{BS} that the English auction is
essentially \emph{the only} perfectly privacy-preserving protocol
for \twoSPA. Thus, perfect privacy requires, in the worst-case, the
transmission of $\Omega(2^k)$ bits. $2k$ bits suffice, because
bidders can simply reveal their inputs. Can we obtain ``good''
privacy without paying such a high price in communication?

\subsection{Objective Privacy PARs}

We now consider \emph{objective privacy} for \twoSPA (\emph{i.e.},
privacy with respect to the auctioneer). Bisection
auctions~\cite{GHM06,GHMV07} for \twoSPA are defined similarly to
the \bp for \twoMP : Use binary search to find a value $c$ that lies
between the two bidders' values, and let the bidder with the higher
value be bidder $j$. (If the values do not differ, we will also
discover this; in this case, award the item to bidder $1$, who must
pay the common value.) Use binary search on the interval that
contains the value of the lower bidder in order to find the value of
the lower bidder. Bisection auctions are incentive-compatible in ex-post Nash~\cite{GHM06,GHMV07}.

More generally, we refer to an auction protocol as a $c$-bisection
auction, for a constant $c\in (0,1)$, if in each step the interval
$R$ is partitioned into two disjoint subintervals: a lower
subinterval of size $c|R|$ and an upper subinterval of size
$(1-c)|R|$. Hence, the \ba is a $c$-bisection auction with
$c=\frac{1}{2}$. We prove that no $c$-bisection auction for \twoMP
obtains a subexponential objective PAR:

\begin{thm}[A worst-case lower bound for $c$-bisection auctions]\label{thm_worst-case-tradeoff}
\ \ For any constant $c > \frac{1}{2^k}$, the $c$-bisection auction for
\twoSPA has a worst-case PAR of at least
$2^{\frac{k}{2}}$.
\end{thm}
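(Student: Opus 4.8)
The plan is to compare, for a carefully chosen input, the tile produced by the $c$-bisection auction with the ideal monochromatic region containing it. First I would describe the ideal partition of $A(f)$ for \twoSPA: since the output is the winner's identity together with the loser's value, the maximal monochromatic regions form a ``staircase'' — for each $v$ a vertical segment $\{(x_1,v):x_1\ge v\}$ (bidder $1$ wins, price $v$) of size $2^k-v$, and a horizontal segment $\{(v,x_2):x_2>v\}$ (bidder $2$ wins, price $v$) of size $2^k-1-v$. Next I would record how the protocol localizes inputs: the threshold-search phase narrows a common interval until the two values fall on opposite sides of a split point $s$ of some interval $[a,b)$; at that moment the \emph{winner} is confined to the upper part $[s,b)$ and is \emph{never refined again}, while the loser-search phase pins the loser's value exactly. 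Hence every tile has the form (winner-interval)$\,\times\,$(single loser value), and the objective PAR at such a cell equals $(2^k-v)/(b-s)$ when bidder $1$ wins with price $v$ (the tile sits inside the ideal column-$v$ segment because $v<s$ and $b\le 2^k$).

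Given this, the core of the argument is to exhibit an input for which the winner is trapped in a short interval while the loser's value is small. I would drive both bidders down the \emph{lower} branch of the bisection — legitimate because the hypothesis $c>1/2^k$ makes the very first split $s_1=\lfloor c\,2^k\rfloor\ge 1$ nontrivial — keeping the loser pinned to $0$, so that its ideal region is the entire column $0$, of size $2^k$. As soon as the current interval $[0,L)$ has size $L\le 2^{k/2}$, I separate the two bidders inside it: the loser (value $0$) lies in the nonempty lower part and the winner just above the split, so the winner's trap has size at most $L\le 2^{k/2}$. The resulting cell then has PAR at least $2^k/2^{k/2}=2^{k/2}$, which is the claimed bound.

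The step I expect to be the main obstacle is making ``as soon as the interval has size $\le 2^{k/2}$'' precise for \emph{every} $c>1/2^k$: because each lower-branch step multiplies the interval length by roughly $c$, the reachable sizes form a geometric sequence that may \emph{skip} the scale $2^{k/2}$ (this is exactly the delicate regime $c\approx 2^{-k/2}$, which is also where the bound is tight). I would handle this by a case analysis on where the successive split sizes fall relative to $2^{k/2}$, mirroring the split into $m\le 2^{k/2}$ versus $m>2^{k/2}$ used in the worst-case millionaires lower bound above: if some lower-branch interval already has size in the usable window one separates there, and otherwise one inserts a single \emph{upper}-branch step to bring the winner's trap down to the right scale (with the loser pinned just below it, so its ideal column still has size $\approx 2^k$). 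A clean fallback that settles the bound in all remaining cases is the diagonal input $(0,0)$: to report the price $0$ the threshold search must narrow the common interval all the way to the singleton $\{0\}$, pinning \emph{both} values, so the tile has size $1$ inside the size-$2^k$ column and the PAR is $2^k\ge 2^{k/2}$. The value of the first construction is that it yields the bound from genuinely separating (non-tie) inputs and exposes the extremal dependence on $c$.
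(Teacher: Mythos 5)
Your proposal is correct, and its overall strategy---exhibit a concrete input on which the $c$-bisection auction's induced rectangle is far smaller than the ideal region containing it---is the same as the paper's, but the witnesses and the organization differ enough to be worth comparing. The paper argues around the \emph{first} meaningful bit: perfect privacy would require that bit to split off the column $x_2=0$, and since the $c$-bisection split cannot do this (as $c>2^{-k}$), the paper branches on $c$ versus $1-2^{-k/2}$, taking $x_1=0$ when $c$ is large (the loser is separated at the first split and the winner is trapped in the upper subinterval of size $(1-c)2^k<2^{k/2}$, against an ideal region of size $2^k-1$) and $x_1=c2^k-1$ when $c$ is small (which delivers the bound only when read as the tie $(c2^k-1,c2^k-1)$: the protocol resolves every tie to a singleton tile while the ideal region there has size $(1-c)2^k+1\geq 2^{k/2}$). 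Your ``descend the lower branch until the interval has size at most $2^{k/2}$'' construction is a finer-grained version of the same idea, and the geometric-skipping difficulty you flag is real; but your fallback is the decisive observation. For every $c>2^{-k}$ the input $(0,0)$ forces the threshold search to shrink the common interval all the way to $\{0\}$, so its tile is the singleton $\{(0,0)\}$ inside the ideal column of size $2^k$, giving ratio $2^k\geq 2^{k/2}$ with no case analysis at all. That single witness already proves the theorem (indeed with a stronger constant than stated); the only reason to retain your main construction, or the paper's case split, is to obtain non-tie witnesses and to expose how the achievable ratio actually depends on $c$.
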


\begin{proof}
Consider the ideal monochromatic partition of \twoSPA depicted for $k=3$ in
Fig.~\ref{fig:illustrate}. Observe that, for perfect privacy to be
preserved, it must be that bidder $2$ transmits the first
(meaningful) bit, and that this bit partitions the space of inputs
into the leftmost shaded rectangle (the set $\{0,\ldots,2^k-1\}\times\{0\}$) and the rest of the value space (ignoring the rectangles depicted that further refine $\{0,\ldots,2^k-1\}\times\{1,\ldots,2^k-1\}$). What if the first bit is
transmitted by player $2$ and does not partition the space into
rectangles in that way? We observe that any other partition of the
space into two rectangles is such that, in the worst case, the
privacy-approximation ratio is at least $2^{\frac{k}{2}}$ (for any
value of $c$): If $c\leq 1-2^{-\frac{k}{2}}$, then the case in which
$x_1=c2^k-1$ gives us the lower bound. If, on the other hand, $c>
1-2^{-\frac{k}{2}}$, then the case that $x_1=0$ gives us the lower
bound. Observe that such a bad PAR is also
the result of bidder $1$'s transmitting the first (meaningful) bit.
\end{proof}

By contrast, as for \twoMP, reasonable privacy
guarantees are achievable in the average case:
\begin{thm}\label{thm:ba-2spa}\rm{\bf{(The average-case objective PAR of the bisection auction)}}\ \
The av\-er\-age-case objective PAR of the \ba is
$\frac{k}{2}+1$ with respect to the uniform distribution.
\end{thm}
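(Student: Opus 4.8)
The plan is to compute the average-case objective PAR exactly, by first understanding both the ideal monochromatic partition of $A(f)$ for \twoSPA and the monochromatic tiling induced by the \ba, and then summing the contribution of each induced tile. The key realization is that each induced tile contributes a fixed amount to the average, so the whole quantity reduces to a tile-counting problem governed by the binary expansions of the values.

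First I would pin down the ideal partition (Fig.~\ref{fig:illustrate}). Since the output of \twoSPA is the winner's identity together with the loser's value, two inputs are ideally indistinguishable iff they share a winner and a loser value. Hence ``bidder $1$ wins with loser value $v$'' is the vertical segment $\{(x_1,v) : x_1 \ge v\}$ of size $2^k-v$ in column $v$, and ``bidder $2$ wins with loser value $v$'' is the horizontal segment $\{(v,x_2) : x_2 > v\}$ of size $2^k-1-v$ in row $v$. Next I would describe the induced tiling by tracing the protocol: the two values travel down a common dyadic interval until the first bisection step whose midpoint separates them; thereafter the loser is pinned to a single value $v$ by further bisection, while the winner is known only to lie in the upper half of the separating interval. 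Consequently every induced rectangle fixes the loser's coordinate, so a bidder-$1$ tile lies in a single column and a bidder-$2$ tile lies in a single row (the structure that the proof of Thm.~\ref{thm:mp-bp-spar} obtains for the subjective millionaires case only after reflecting across the diagonal is already present here). The only extra tiles are the singleton ``tie'' tiles $\{(v,v)\}$, arising when the two values never separate.

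The simplification is that each tile's contribution depends only on the region containing it. If $T$ is a bidder-$1$ tile in column $v$, every point of $T$ has ideal region of size $2^k-v$ and induced rectangle exactly $T$, so $T$ contributes $|T|\cdot\frac{2^k-v}{|T|}=2^k-v$ to $\sum_{(x_1,x_2)} |R^I|/|R^P|$, independent of $|T|$; likewise each bidder-$2$ tile in row $v$ contributes $2^k-1-v$. Thus the average-case PAR equals $2^{-2k}$ times the sum, over all tiles, of the size of the ideal region containing that tile, and it only remains to count tiles. I would show that the bidder-$1$ tiles meeting column $v$ are the upper halves of the separating intervals (one for each bisection level at which $v$ falls in the lower half, i.e.\ for each $0$-bit of $v$) together with the single tie tile, giving $z(v)+1$ tiles, where $z(v)$ is the number of zeros in the $k$-bit expansion of $v$; symmetrically there are $z(v)$ bidder-$2$ tiles meeting row $v$.

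Putting this together, the average-case objective PAR equals
\[
\frac{1}{2^{2k}}\left[\sum_{v=0}^{2^k-1}(z(v)+1)(2^k-v) + \sum_{v=0}^{2^k-1} z(v)\,(2^k-1-v)\right].
\]
The final step is to evaluate this closed form. Using $\sum_v z(v) = k\,2^{k-1}$ and the weighted bit-count identity $\sum_v z(v)\,v = \tfrac14 \, 2^k(2^k-1)(k-1)$ (each bit position being zero in exactly half the values), the bracketed quantity collapses to $\tfrac12\,2^{2k}(k+2)$, yielding $\frac{k}{2}+1$. I expect the main obstacle to be bookkeeping rather than conceptual: getting the tile counts exactly right—in particular correctly accounting for the tie tiles, which supply the ``$+1$'' in the bidder-$1$ count but have no bidder-$2$ analogue since ties break toward bidder $1$—and then carrying out the weighted popcount sum $\sum_v z(v)\,v$ cleanly. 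A useful consistency check is $k=1$, where all four tiles are singletons and the formula returns $6/4=\tfrac32=\tfrac{k}{2}+1$.
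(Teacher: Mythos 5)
Your proof is correct, and it takes a genuinely different route from the paper's. The paper proves this theorem as a one-line corollary of the general bounded-bisection result (Thm.~\ref{thm:2spa-par-bba} with $g(k)=k$), whose proof shares your key reduction --- collapsing the average to $2^{-2k}\sum_{R}|R^I(R)|$, a sum over protocol-induced tiles of the containing ideal region's size (Eq.~(\ref{eq:par2})) --- but then counts by setting up and solving recurrences ($a_{c+1,i}=2a_{c,i}+2^{c+i+1}$, $b_{c+1,i}=2b_{c,i}+a_{c,i}2^{c+i}+2^{2(c+i)}$) that exploit the self-similar quadrant structure of the induced tiling. You instead count directly: the bidder-$1$ tiles meeting column $v$ are the upper halves of the dyadic intervals at the levels where $v$ has a $0$-bit, plus the singleton tie tile, giving $z(v)+1$ tiles (and $z(v)$ bidder-$2$ tiles in row $v$), after which everything reduces to $\sum_v z(v)=k2^{k-1}$ and $\sum_v z(v)\,v=\tfrac14 2^k(2^k-1)(k-1)$. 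I verified your tile counts (the union of the $z(v)$ dyadic upper halves together with $\{v\}$ has size $2^k-v$, matching the ideal region, and the total tile count $(k+1)2^k$ agrees with the paper's $a_{k,0}$) and the final summation, which indeed collapses to $2^{2k}(\tfrac{k}{2}+1)$. What each approach buys: the paper's recursion handles the whole family of bounded-bisection auctions uniformly and yields the interpolation between the English auction and the bisection auction; your bit-counting argument is self-contained, avoids solving recurrences, and makes transparent \emph{why} the answer is $\Theta(k)$ --- each zero bit of the loser's value spawns one extra tile inside its ideal region.
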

\begin{proof}
This follows by taking $g(k)=k$ in Thm.~\ref{thm:2spa-par-bba}.
\end{proof}

We note that the worst-possible approximation of objective privacy
comes when the each value in the space is in a distinct tile; this
is the tiling induced by the sealed-bid auction.  The resulting
average-case privacy-approximation ratio is exponential in $k$.
\begin{prop}[Largest possible objective PAR]
The largest possible (for any protocol) average-case objective
PAR with respect to the uniform distribution
for \twoSPA is
\[
    \frac{1}{2^{2k}}\left[\sum_{j=0}^{2^k-1} j^2  + \sum_{j=1}^{2^k-1} j^2 \right] = \frac{2}{3}2^k + \frac{1}{3}2^{-k}
\]
\end{prop}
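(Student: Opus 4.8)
The plan is to pin down the protocol that maximizes the average-case objective PAR and then to evaluate that maximum in closed form. First I would note that for \emph{any} protocol $P$ and any input pair $(x_1,x_2)$, the induced rectangle $R^P(x_1,x_2)$ is a monochromatic region and therefore, by maximality of the ideal regions, satisfies $R^P(x_1,x_2) \subseteq R^I(x_1,x_2)$; in particular $1 \le |R^P(x_1,x_2)| \le |R^I(x_1,x_2)|$ for every pair. Hence each summand $|R^I(x_1,x_2)|/|R^P(x_1,x_2)|$ in the average-case objective PAR is at most $|R^I(x_1,x_2)|$, with equality exactly when the tile is the singleton $\{(x_1,x_2)\}$. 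The sealed-bid auction, in which both bidders reveal their values, induces precisely the all-singletons tiling, so it attains this bound simultaneously for every pair; I would conclude that it realizes the largest possible average-case objective PAR and that this maximum equals $E_D[\,|R^I(x_1,x_2)|\,]$ for the uniform distribution $D$.

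Next I would convert this expectation into a sum over the ideal regions. Because each pair lies in exactly one region $R$ of the ideal monochromatic partition and contributes $|R|$ to $\sum_{(x_1,x_2)} |R^I(x_1,x_2)|$, each region contributes $|R|^2$ in total, so that
\[
E_D\big[\,|R^I(x_1,x_2)|\,\big] = \frac{1}{2^{2k}}\sum_{R} |R|^2 ,
\]
where the sum ranges over the regions of the ideal partition of \twoSPA.

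The third step is to read the region sizes off the ideal partition (Fig.~\ref{fig:illustrate}). The regions in which bidder $1$ wins are the horizontal strips of fixed second bid $p$, namely $\{(x_1,p) : p \le x_1 \le 2^k-1\}$ of size $2^k-p$ for $p = 0,\dots,2^k-1$; as $p$ varies these sizes run once through $\{1,\dots,2^k\}$. The regions in which bidder $2$ wins are the vertical strips of fixed lower bid $p$, namely $\{(p,x_2) : p < x_2 \le 2^k-1\}$ of size $2^k-1-p$ for $p = 0,\dots,2^k-2$, whose sizes run once through $\{1,\dots,2^k-1\}$. Substituting these sizes into $\sum_R |R|^2$ gives $\sum_{j=1}^{2^k} j^2 + \sum_{j=1}^{2^k-1} j^2$.

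Finally I would evaluate the result using $\sum_{j=1}^{m} j^2 = m(m+1)(2m+1)/6$: combining the sum of squares over $\{1,\dots,2^k\}$ with the sum over $\{1,\dots,2^k-1\}$ and dividing by $2^{2k}$ yields $\tfrac{2}{3}2^k + \tfrac{1}{3}2^{-k}$ after routine simplification. The one genuinely non-mechanical point is the opening step---verifying both that singleton tiles maximize each term (via $R^P \subseteq R^I$) and that this optimal tiling is actually realized by a protocol; the remainder is bookkeeping of region sizes together with a standard sum-of-squares evaluation.
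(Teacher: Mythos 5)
Your proof is correct and follows the same route the paper takes implicitly: the paper gives no formal proof, only the preceding remark that the worst case is the all-singletons tiling induced by the sealed-bid auction, and your argument (each ratio is bounded by $|R^I(x_1,x_2)|$ because $R^P(x_1,x_2)\subseteq R^I(x_1,x_2)$, with equality realized by sealed bids) is exactly the missing justification, followed by the same bookkeeping over ideal-region sizes. One substantive point: your computation actually exposes a typo in the printed statement. As written, the first sum $\sum_{j=0}^{2^k-1}j^2$ equals $\sum_{j=1}^{2^k-1}j^2$, so the displayed left-hand side is $\tfrac{1}{2^{2k}}\cdot 2\sum_{j=1}^{2^k-1}j^2=\tfrac{2}{3}2^k-1+\tfrac{1}{3}2^{-k}$, which does not equal the stated closed form; your sum $\sum_{j=1}^{2^k}j^2+\sum_{j=1}^{2^k-1}j^2$ (the bidder-$1$ regions have sizes $1,\dots,2^k$, not $0,\dots,2^k-1$) is the one that correctly evaluates to $\tfrac{2}{3}2^k+\tfrac{1}{3}2^{-k}$. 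So trust your region count over the printed summation limits.
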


\subsection{Subjective Privacy PARs} We now look briefly at subjective
privacy for \twoSPA.  For subjective privacy with respect to $1$, we
start with the $1$-partition for \twoSPA;
Fig.~\ref{fig:2spa:ba1part} shows the refinement of the
$1$-partition induced by the \ba for $k=4$.  Separately considering
the refinement of the $2$-partition for \twoSPA by the \ba, we have
the following results.
\begin{figure}[htp]
\begin{center}
\includegraphics[width=1.5in]{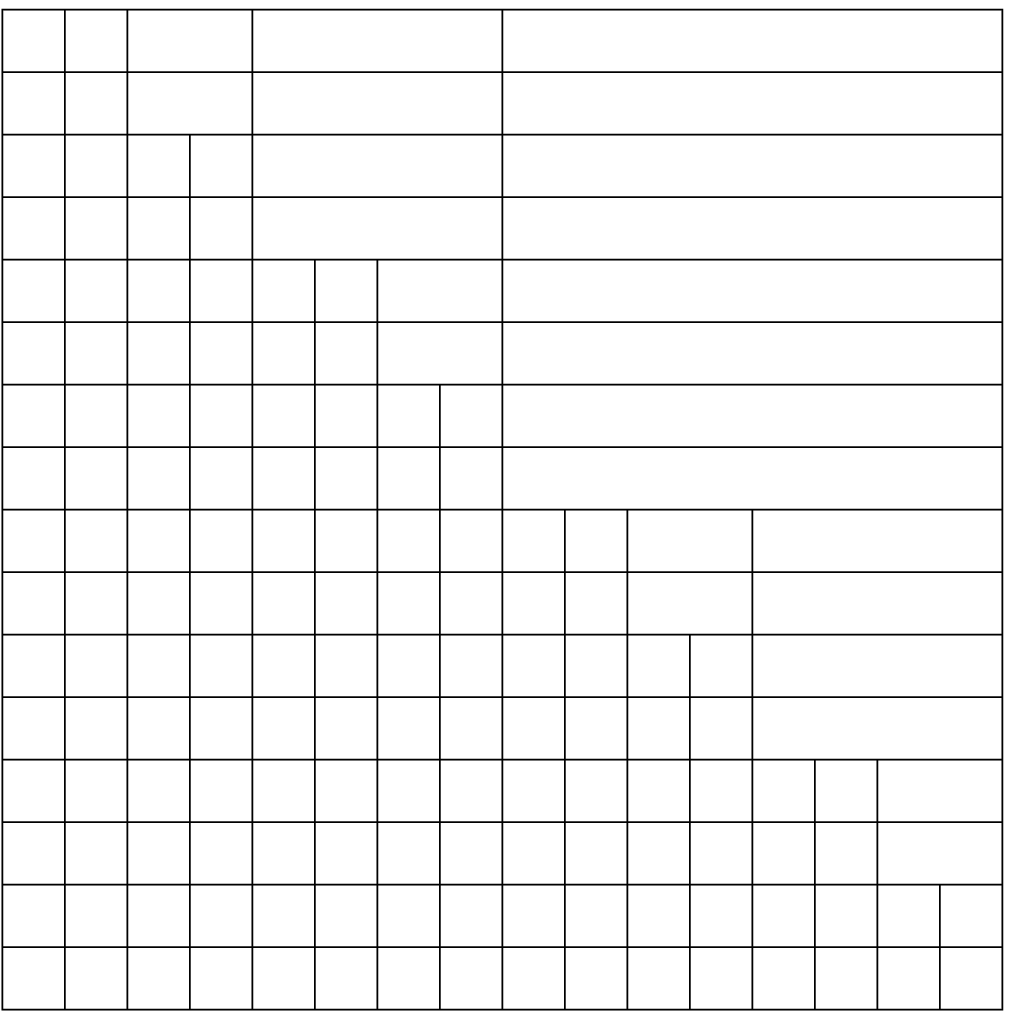}
\caption{The \bha-induced refinement of the $1$-partition for \twoSPA ($k=4$)}\label{fig:2spa:ba1part}
\end{center}
\end{figure}

\begin{thm}\label{thm:ba-2spa-s1}\rm{\bf{(The average-case PAR with respect to $1$ of the bisection auction)}}\ \
The av\-er\-age-case PAR with respect to $1$ of the \ba is \[\frac{k+3}{4}-\frac{k-1}{2^{k+2}}\] with respect to the uniform distribution.
\end{thm}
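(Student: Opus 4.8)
The plan is to compute the average-case PAR with respect to $1$ directly from the geometry of the two relevant refinements: the $1$-ideal monochromatic partition of the value space for \twoSPA and the $1$-induced tiling of the \ba (shown for $k=4$ in Fig.~\ref{fig:2spa:ba1part}). First I would split the value space into the cells where player $1$ wins ($x_1\ge x_2$, ties included) and those where player $2$ wins ($x_2>x_1$). In a cell where player $1$ wins, the output names the loser's (player $2$'s) value $x_2$ exactly, so the ideal row-region $R_1^I(x_1,x_2)$ is the single cell $\{(x_1,x_2)\}$; the \ba likewise pins down the loser's value exactly, so $R_1^P(x_1,x_2)$ is again a single cell and the ratio is $1$. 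Hence every player-$1$-wins cell contributes exactly $1$, and there are $\frac{2^k(2^k+1)}{2}$ of them.

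The crux is the player-$2$-wins cells. Here the ideal partition reveals to $1$ only that $x_2>x_1$, so $|R_1^I(x_1,x_2)| = 2^k-1-x_1$, independent of $x_2$. The \ba instead narrows player $2$ (the winner) down to the half-interval of width $2^{p}$ on which the two values first separate, where $p$ is the position of the most significant bit in which $x_1$ and $x_2$ differ; since the \ba refines only the loser's interval after the split, $|R_1^P(x_1,x_2)| = 2^{p}$. The key observation I would isolate is a cancellation: for fixed $x_1$, the values $x_2>x_1$ that split at position $p$ are exactly those agreeing with $x_1$ above bit $p$, carrying a $1$ where $x_1$ has a $0$ in bit $p$, and arbitrary below---so they exist only when bit $p$ of $x_1$ is $0$, in which case there are exactly $2^{p}$ of them. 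As each such cell has denominator $2^{p}$, these $2^{p}$ cells together contribute $2^{p}\cdot\frac{2^k-1-x_1}{2^{p}} = 2^k-1-x_1$. Summing over the zero-bit positions of $x_1$, the player-$2$-wins contribution of row $x_1$ is $(2^k-1-x_1)\,z(x_1)$, where $z(x_1)$ is the number of zero bits in the $k$-bit representation of $x_1$.

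It then remains to evaluate $S := \sum_{x_1=0}^{2^k-1}(2^k-1-x_1)\,z(x_1)$ and assemble $\alpha = \frac{1}{2^{2k}}\bigl[\frac{2^k(2^k+1)}{2} + S\bigr]$. I would reindex by the bitwise complement $y = 2^k-1-x_1$ (valid because $2^k-1$ is all ones), which sends $z(x_1)$ to the number of one-bits $o(y)$ and $2^k-1-x_1$ to $y$, so that $S = \sum_{y=0}^{2^k-1} y\cdot o(y)$. Expanding $y$ and $o(y)$ over bit positions and exchanging the order of summation reduces $S$ to counting, for each pair of bit positions, the number of $y$ with both bits simultaneously set ($2^{k-1}$ when the positions coincide, $2^{k-2}$ otherwise), which gives $S = (k+1)(2^k-1)2^{k-2}$. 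Substituting and simplifying then yields $\alpha = \frac{k+3}{4}-\frac{k-1}{2^{k+2}}$, as claimed.

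The main obstacle is not the final summation, which is a routine double count over bit positions, but establishing the cancellation cleanly: one must verify that for each fixed $x_1$ the player-$2$-wins cells partition by their most-significant-differing-bit $p$, that the \ba winner-interval width is exactly $2^{p}$ (i.e., that the auction stops refining the winner's side once the two values separate), and that the number of such cells is also exactly $2^{p}$, so that the two cancel. Pinning down this $1$-partition/$1$-induced-tiling geometry---rather than computing $S$---is where the care lies.
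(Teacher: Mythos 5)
Your proof is correct, and it reaches the stated value by a genuinely different route from the paper's. The paper obtains this theorem as the $g(k)=k$ specialization of Thm.~\ref{thm:bba-2spa-s1}, whose proof sets up the same accounting you do (summing $|R_1^I(R)|$ over the rectangles of the $1$-induced tiling, split into bidder-$1$-wins and bidder-$2$-wins rectangles) but then computes the three relevant quantities via recurrences driven by the self-similar quadrant structure of the \bhba-induced tiling (upper-left and lower-right quadrants repeat the $c$-bisection pattern, the off-diagonal quadrants are unstructured), solves the recurrences in closed form, and substitutes. You instead work row by row at the bit level: your observation that, for fixed $x_1$, the $x_2>x_1$ splitting at bit position $p$ form exactly the protocol rectangle of width $2^p$ (nonempty iff bit $p$ of $x_1$ is $0$), so that the width cancels against the denominator and row $x_1$ contributes $(2^k-1-x_1)z(x_1)$, is the key step, and your reduction via complementation to $\sum_y y\cdot o(y)=(k+1)(2^k-1)2^{k-2}$ is a clean digit-sum identity. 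I checked that your intermediate quantities agree with the paper's (\eg, your count $\sum_{x_1}z(x_1)=k2^{k-1}$ of bidder-$2$-wins rectangles matches the paper's $x_{k,0}$, and your $S$ equals $2^kx_{k,0}-y_{k,0}$), and your treatment of ties and of the ideal $1$-partition (singletons where bidder $1$ wins, row segments of length $2^k-1-x_1$ where bidder $2$ wins) is right. What your argument buys is a self-contained, recurrence-free derivation with some combinatorial content (the digit-sum formula); what it gives up is the uniformity of the paper's approach, which handles all bounded-bisection auctions $0\leq g(k)\leq k$ at once and yields Thms.~\ref{thm:2spa-par-bba}, \ref{thm:bba-2spa-s1}, and \ref{thm:bba-2spa-s2} from the same template, whereas your bit-position classification would need nontrivial adaptation when the bisection phase is truncated and followed by an English auction.
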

\begin{proof}
This follows by taking $g(k)=k$ in Thm.~\ref{thm:bba-2spa-s1}.
\end{proof}

\begin{thm}\label{thm:ba-2spa-s2}\rm{\bf{(The average-case PAR with respect to $2$ of the bisection auction)}}\ \
The av\-er\-age-case PAR with respect to $2$ of the \ba is \[\frac{k+5}{4}+\frac{k-1}{2^{k+2}}\] with respect to the uniform distribution.
\end{thm}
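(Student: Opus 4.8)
The plan is to compute the average-case PAR with respect to player $2$ for the \ba directly, by analyzing the $2$-induced tiling and the $2$-ideal-monochromatic partition for \twoSPA, and then averaging the size-ratio over the uniform distribution. As Thm.~\ref{thm:ba-2spa-s1} is proved by specializing a more general statement (Thm.~\ref{thm:bba-2spa-s1}) about \bba, the natural route is to prove the analogous general statement (presumably Thm.~\ref{thm:bba-2spa-s2}, to which this theorem reduces by setting $g(k)=k$) and read off the claimed closed form. First I would fix player $2$'s value $x_2=b$ and examine the column $\{0,\ldots,2^k-1\}\times\{b\}$: in the $2$-ideal partition, this column splits into the set of rows where player $2$ wins (a maximal monochromatic block of height $b$, namely $x_1<b$) and the rows where player $1$ wins, which must be further broken up because when $x_1>b$ the auctioneer must learn $x_2=b$ exactly (the price), so each such entry lies in a tile of its own along the column. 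This asymmetry relative to player $1$ is exactly what accounts for the constant $+5/4$ versus $+3/4$.

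Second, I would describe the tiles the \ba induces within each column. The \ba performs binary search to locate a separating value $c$ and then binary-searches the lower bidder's interval to pin down the lower value. For player $2$ fixed at $b$, I would track, as $x_1$ ranges over $\{0,\ldots,2^k-1\}$, the sizes of the monochromatic rectangles of the $2$-induced tiling that the protocol produces in that column, as illustrated (for the $1$-partition) in Fig.~\ref{fig:2spa:ba1part}. The key combinatorial fact is that these tile widths are governed by the bisection recursion, so the number of tiles of each size at each level of the recursion follows a clean pattern in $k$; the ratio of the ideal block size to the protocol tile size is then summed against the uniform weights $2^{-2k}$.

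Third, I would organize the averaging as a double sum: over columns $b$ and, within each column, over the regions of the $2$-induced tiling, weighting each region by its cardinality times the ratio (ideal region size)/(tile size), all divided by $2^{2k}$. By symmetry of the bisection recursion across levels, this collapses to a sum over the $k$ recursion levels of a per-level contribution, plus lower-order boundary terms coming from the tie-breaking at $x_1=x_2$ and the top of the range $x=2^k-1$. Summing the geometric-type series over levels produces the leading linear-in-$k$ term $\frac{k}{4}$, the constant $\frac{5}{4}$, and the exponentially small correction $+\frac{k-1}{2^{k+2}}$.

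The main obstacle will be the bookkeeping of the boundary and tie-breaking cases that distinguish player $2$'s analysis from player $1$'s: because ties are broken in favor of player $1$ and because, for player $2$, the winning region (where the auctioneer learns $x_2$ exactly) forces singleton-width behavior along columns, the off-diagonal structure differs from the player-$1$ case by a controlled amount. I expect that, once Thm.~\ref{thm:bba-2spa-s2} is set up with $g(k)$ general, the present statement follows immediately by substituting $g(k)=k$, exactly as in the proof of Thm.~\ref{thm:ba-2spa-s1}; the real work is isolating the correct per-level contribution and verifying that the small correction term has the stated sign, which here is $+\frac{k-1}{2^{k+2}}$ rather than the $-\frac{k-1}{2^{k+2}}$ appearing for player $1$.
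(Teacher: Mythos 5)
Your overall route is the same as the paper's: the paper proves the general bounded-bisection statement (Thm.~\ref{thm:bba-2spa-s2}) by writing the average-case PAR as $\frac{1}{2^{2k}}\sum_{R}|R^I(R)|$ over protocol-induced rectangles of the $2$-induced tiling and exploiting the self-similar quadrant structure of the bisection recursion, then obtains the present theorem by setting $g(k)=k$. However, your description of the $2$-ideal partition of a column is inverted, and this is the foundational step on which the whole computation rests. Fix $x_2=b$ and restrict to the column $\{0,\ldots,2^k-1\}\times\{b\}$. Where player $2$ wins ($x_1<b$), the output includes the price, which is the \emph{losing} bid $x_1$; hence these entries all have distinct outputs and each is a singleton in the $2$-ideal partition. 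Where player $1$ wins ($x_1\geq b$), the output is ``$1$ wins at price $b$'' for every such $x_1$, so this is a single maximal block of size $2^k-b$. You assert the opposite: that $x_1<b$ forms one monochromatic block of height $b$ and that the $x_1\geq b$ entries must each ``lie in a tile of its own'' because ``the auctioneer must learn $x_2=b$.'' That reasoning conflates objective privacy (the auctioneer's view) with PAR with respect to player $2$ (what player $2$ learns about $x_1$); along a fixed column, revealing $x_2$ is vacuous, and the revelation that actually fragments the column is of $x_1$ via the price when $2$ wins. (Your final paragraph then asserts the correct singleton behavior for the player-$2$-winning region, contradicting your first paragraph.) With the inverted partition, each protocol rectangle in the player-$2$-wins part would be credited $b$ instead of $1$ and each in the player-$1$-wins part credited $1$ instead of $2^k-b$, which does not yield $\frac{k+5}{4}+\frac{k-1}{2^{k+2}}$.

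Beyond that, the combinatorial core is left as an appeal to a ``clean pattern.'' The paper makes this precise by tracking three quantities for $c$ bisections of a $(c+i)$-bit space --- the number $u_{c,i}$ of rectangles where bidder $1$ wins, the sum $v_{c,i}$ of $2^{c+i}-|R^I(R)|$ over those, and the number $w_{c,i}$ of rectangles where bidder $2$ wins --- and deriving recurrences such as $u_{c+1,i}=2u_{c,i}+2^{c+i}$ from the fact that the first bisection reproduces two half-size copies of the structure plus two unstructured off-diagonal quadrants. You would need to state and solve these (or equivalent) recurrences to get the exact closed form, including the sign of the $\frac{k-1}{2^{k+2}}$ term; the per-level summation you sketch is plausible but is not carried far enough to certify the constant or the correction term.
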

\begin{proof}
This follows by taking $g(k)=k$ in Thm.~\ref{thm:bba-2spa-s2}.
\end{proof}
\begin{cor}\label{cor:ba-2spa-s}\rm{\bf{(The average-case subjective PAR of the bisection auction)}}\ \
The av\-er\-age-case subjective PAR of the \ba with respect to the uniform distribution is \[\frac{k+5}{4}+\frac{k-1}{2^{k+2}}.\]
\end{cor}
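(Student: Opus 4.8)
The plan is to derive this corollary directly from the two immediately preceding theorems together with the definition of average-case subjective PAR. Recall that the average-case subjective PAR of a protocol is defined as the maximum, over the two parties $i\in\{1,2\}$, of the average-case PAR with respect to party $i$. Thus, for the \ba applied to \twoSPA, the subjective PAR is simply the larger of the two quantities computed in Thm.~\ref{thm:ba-2spa-s1} and Thm.~\ref{thm:ba-2spa-s2}, and the entire content of the corollary is to identify which of the two is the maximum.

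First I would invoke Thm.~\ref{thm:ba-2spa-s1}, which gives the average-case PAR with respect to $1$ as $\frac{k+3}{4}-\frac{k-1}{2^{k+2}}$, and Thm.~\ref{thm:ba-2spa-s2}, which gives the average-case PAR with respect to $2$ as $\frac{k+5}{4}+\frac{k-1}{2^{k+2}}$. These two theorems are themselves obtained by specializing the bounded-bisection results (Thm.~\ref{thm:bba-2spa-s1} and Thm.~\ref{thm:bba-2spa-s2}) to $g(k)=k$, so I may take them as given here.

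Next I would compare the two values by examining their difference. Subtracting the with-respect-to-$1$ value from the with-respect-to-$2$ value yields
\[
\left(\frac{k+5}{4}+\frac{k-1}{2^{k+2}}\right)-\left(\frac{k+3}{4}-\frac{k-1}{2^{k+2}}\right)=\frac{1}{2}+\frac{k-1}{2^{k+1}},
\]
which is strictly positive for every $k\ge 1$. Hence the maximum of the two per-party PARs is attained by party $2$, and the average-case subjective PAR of the \ba equals $\frac{k+5}{4}+\frac{k-1}{2^{k+2}}$, exactly as claimed.

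There is no genuine obstacle in this argument: once Thm.~\ref{thm:ba-2spa-s1} and Thm.~\ref{thm:ba-2spa-s2} are in hand, the corollary is an immediate consequence of the definition of subjective PAR as a maximum over the parties, and the only computation required is the trivial sign check displayed above. All of the substantive work lies in establishing the two per-party theorems; the corollary itself merely packages them and selects the larger bound.
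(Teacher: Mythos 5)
Your proof is correct and matches the paper's own reasoning: the paper likewise obtains this corollary by taking the maximum of the two per-party PARs from Thm.~\ref{thm:ba-2spa-s1} and Thm.~\ref{thm:ba-2spa-s2} (in the general bounded-bisection case it simply notes that the PAR with respect to $2$ is at least as large as that with respect to $1$). Your explicit sign check of the difference is a slightly more careful version of the same one-line observation.
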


As for objective privacy, the sealed-bid auction gives the largest possible average-case subjective PAR.
\begin{prop}[Largest possible subjective PAR]
\ The largest possible (for any protocol) average-case subjective PAR with respect to the uniform distribution for \twoSPA is
\[
\frac{2^k}{3} + 1 - \frac{1}{3\cdot 2^k}.
\]
\end{prop}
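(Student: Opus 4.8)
The plan is to show that, exactly as in the objective case, the sealed-bid auction---in which each input pair lies in its own singleton tile---simultaneously maximizes the average-case PAR with respect to each party, and then to compute the two resulting averages and take the larger. First I would justify the optimality of the sealed-bid auction: for any protocol $P$, the monochromatic tiling induced by $P$ refines the ideal monochromatic partition (each tile, having a single function value, is contained in the maximal monochromatic region for that value), and this containment is preserved under $i$-partitioning, so $R_i^P(x_1,x_2) \subseteq R_i^I(x_1,x_2)$ for every cell and every $i$. Hence each pointwise ratio satisfies $1 \le |R_i^I(x_1,x_2)|/|R_i^P(x_1,x_2)| \le |R_i^I(x_1,x_2)|$, with the upper bound attained exactly when $|R_i^P(x_1,x_2)| = 1$. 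The sealed-bid auction makes every tile a singleton, so it attains this bound for both $i$ at once; therefore it maximizes the average-case PAR with respect to each party separately, and consequently maximizes the subjective PAR (the maximum over parties). The largest possible average-case subjective PAR is thus $\max_i E_U[\,|R_i^I(x_1,x_2)|\,]$, where $U$ is the uniform distribution.

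Next I would read off the $i$-ideal region sizes from the structure of \twoSPA. The maximal monochromatic regions are: for each price $j$ with player $1$ winning, the column segment $\{(x_1,j): x_1 \ge j\}$ of size $2^k - j$; and for each price $j$ with player $2$ winning, the row segment $\{(j,x_2): x_2 > j\}$ of size $2^k-1-j$. Refining by the $1$-partition (fixing the row) collapses the player-$1$-win segments to singletons and leaves the player-$2$-win segments intact, so $|R_1^I(x_1,x_2)| = 1$ when player $1$ wins and $|R_1^I(x_1,x_2)| = 2^k-1-x_1$ when player $2$ wins. Symmetrically, refining by the $2$-partition gives $|R_2^I(x_1,x_2)| = 1$ when player $2$ wins and $|R_2^I(x_1,x_2)| = 2^k - x_2$ when player $1$ wins.

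Finally I would evaluate the two averages. Writing $N = 2^k$, summing $|R_2^I|$ over the $\frac{N(N-1)}{2}$ player-$2$-win cells (each contributing $1$) and over the player-$1$-win cells (where the column $x_2 = j$ contributes $(N-j)^2$) gives $\frac{N(N-1)}{2} + \sum_{m=1}^{N} m^2$; dividing by $N^2$ and applying $\sum_{m=1}^N m^2 = \frac{N(N+1)(2N+1)}{6}$ simplifies to $\frac{N}{3} + 1 - \frac{1}{3N}$. The analogous computation for party $1$ yields $\frac{N}{3} + \frac{2}{3N}$, which is strictly smaller for all $k \ge 1$ (the difference is $1-\frac{1}{N}$); so the subjective PAR equals the party-$2$ value, namely $\frac{2^k}{3} + 1 - \frac{1}{3\cdot 2^k}$, as claimed. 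The main obstacle is the bookkeeping in the second step---correctly identifying which coordinate each $i$-partition fixes, and hence the asymmetric region sizes $2^k - x_2$ versus $2^k-1-x_1$---since it is precisely this asymmetry (tie-breaking in favor of player $1$, together with the loser's value being revealed by the function) that makes the party-$2$ average dominate and produces the extra additive constant over the objective bound $\frac{2}{3}2^k + \frac{1}{3}2^{-k}$.
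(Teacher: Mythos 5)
Your proposal is correct and follows essentially the same route as the paper: the paper likewise takes the sealed-bid auction as the extremal protocol and computes the two averages $\frac{1}{2^{2k}}[\sum_{j=1}^{2^k} j + \sum_{j=1}^{2^k-1} j^2] = \frac{2^k}{3}+\frac{2}{3\cdot 2^{k}}$ (w.r.t.\ party $1$) and $\frac{1}{2^{2k}}[\sum_{j=1}^{2^k} j^2 + \sum_{j=1}^{2^k-1} j] = \frac{2^k}{3}+1-\frac{1}{3\cdot 2^k}$ (w.r.t.\ party $2$), which are exactly your sums. Your explicit justification that singleton tiles simultaneously maximize both pointwise ratios (and hence the max over parties) is a small addition the paper leaves implicit, but it is not a different argument.
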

\begin{proof}
For the sealed-bid auction, the average-case PAR with respect to $1$ is
\[
\frac{1}{2^{2k}}\left[\sum_{j=1}^{2^k} j + \sum_{j=1}^{2^k-1} j^2\right]=\frac{2^k}{3}+\frac{1}{3\cdot 2^{k-1}}.
\]
For the sealed-bid auction, the average-case PAR with respect to $2$ is
\[
\frac{1}{2^{2k}}\left[\sum_{j=1}^{2^k} j^2 + \sum_{j=1}^{2^k-1} j\right] = \frac{2^k}{3} + 1 - \frac{1}{3\cdot 2^k}.
\]
\end{proof}

\subsection{Bounded-Bisection Auctions}

We now present a middle ground between the perfectly-private yet
highly inefficient (in terms of communication) ascending English
auction and the com\-mu\-ni\-ca\-tion-efficient \ba whose average-case
objective PAR is linear in $k$ (and is thus unbounded as $k$ goes to
infinity): We bound the number of bisections, using an ascending
English auction to determine the outcome if it is not resolved by
the limited number of bisections.

We define the \bba as follows: Given an instance of \twoSPA, and a
integer-valued function $g(k)$ such that $0\leq g(k)\leq k$, run the \ba as above but
do at most $g(k)$ bisection operations.  (Note that we will never do more than $k$ bisections.)\ \ If the outcome is
undetermined after $g(k)$ bisection operations, so that both
players' values lie in an interval $I$ of size $2^{k-g(k)}$, apply
the ascending-price English auction to this interval to determine
the identity of the winning bidder and the value of the losing
bidder.

As $g(k)$ ranges from $0$ to $k$, the \bba ranges from the ascending-price
English auction to the \ba.  If we allow a fixed, positive number of
bisections ($g(k)=c>0$), computations show that for $c=1,2,3$ we obtain examples of protocols that do not
provide perfect privacy but that do have bounded average-case
objective PARs with respect to the uniform distribution.  We wish to see if this holds for all positive $c$, determine the average-case objective PAR for general $g(k)$, and connect the amount of
communication needed with the approximation of privacy in this family of protocols.  The following theorem allows us to do these things.

\begin{thm}\label{thm:2spa-par-bba}
For the \bba, the average-case objective PAR with respect to the uniform distribution equals
\[
\frac{g(k)+3}{2}-\frac{2^{g(k)}}{2^{k+1}}+\frac{1}{2^{k+1}}-\frac{1}{2^{g(k)+1}}.
\]
\end{thm}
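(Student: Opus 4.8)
The plan is to compute the average-case objective PAR directly from the tiling that the \bba induces on the value matrix of \twoSPA. The first step is to put the objective PAR into a tile-centric form. Since the \bba computes \twoSPA correctly, every tile $R^P$ in the induced tiling is monochromatic and hence lies inside a single ideal region $R^I$; each tile therefore contributes $\sum_{(x_1,x_2)\in R^P}|R^I|/|R^P| = |R^I|$ to the numerator of the (uniform) average, so that
\[
\alpha = \frac{1}{2^{2k}}\sum_{\text{tiles } R^P}\bigl|R^I(R^P)\bigr|.
\]
It then suffices to enumerate the tiles and, for each, record the size of the ideal region containing it. Here the ideal regions of \twoSPA are the horizontal segments $\{(x_1,v):x_1\ge v\}$ of size $2^k-v$ (bidder $1$ wins, loser value $v$) and the vertical segments $\{(v,x_2):x_2>v\}$ of size $2^k-1-v$ (bidder $2$ wins, loser value $v$).

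Next I would describe the \bba tiling through its recursive structure; write $g=g(k)$. The protocol performs up to $g$ separating bisections. When the two values first split at a level $j\le g$ — the common dyadic interval of size $2^{k-j+1}$ breaking into a lower half $L$ and an upper half $U$ of size $2^{k-j}$ each — the winner's value ranges over the \emph{entire} winning half and is never refined, while the loser's value is resolved exactly (whether by further bisection or by the English auction makes no difference to the induced tiling). This produces, for each such split, the strips $U\times\{v\}$ of size $2^{k-j}$, one for every loser value $v\in L$. If the two values never separate within $g$ bisections, they lie in a common dyadic block $I_m\times I_m$ of size $2^{k-g}\times 2^{k-g}$, on which the English auction induces the ideal partition, one tile per local ideal segment. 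The crucial point is that these in-block tiles still belong to the \emph{global} ideal regions, which extend outside the block.

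I would then sum $|R^I(R^P)|$ over the two families. For the strips created at level $j$ there are $2^{j-1}$ dyadic blocks, each contributing $\sum_{v\in L}(2^k-v)$ for bidder $1$ and $\sum_{v\in L}(2^k-1-v)$ for bidder $2$; evaluating these arithmetic sums and combining the two winner cases gives a level-$j$ contribution of $2^{2k-1}+2^{2k-j-1}$, and summing over $j=1,\dots,g$ yields $g\,2^{2k-1}+2^{2k-1}-2^{2k-g-1}$. For the diagonal blocks, summing the global ideal sizes $2^k-v$ and $2^k-1-v$ over the relevant $v$ (excluding, for bidder $2$, the top element of each block, where no bidder-$2$ tile exists) yields $2^{2k}+2^{k-1}-2^{g+k-1}$. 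Adding the two pieces, dividing by $2^{2k}$, and simplifying gives exactly $\frac{g+3}{2}-\frac{2^{g}}{2^{k+1}}+\frac{1}{2^{k+1}}-\frac{1}{2^{g+1}}$.

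The main obstacle is establishing the precise tiling structure rather than the arithmetic. One must argue carefully that the winning bidder's interval is genuinely never subdivided (so the strips have width exactly $2^{k-j}$) and that bounding the number of bisections affects only the depth of the diagonal recursion, not the exact resolution of the loser — so that refining the loser by bisection and by the English auction yield identical tiles. The remaining delicate point is the boundary bookkeeping in the diagonal blocks, where the local English tiles must be charged against the larger global ideal regions and the tie-breaking convention removes exactly one bidder-$2$ region per block; getting these off-by-one terms right is what pins down the lower-order terms $2^{-(k+1)}$ and $-2^{-(g+1)}$ in the final formula. As sanity checks I would confirm that $g=0$ gives PAR $1$ (the perfectly private English auction) and that $g=k$ recovers the value $\frac{k}{2}+1$ of Thm.~\ref{thm:ba-2spa}.
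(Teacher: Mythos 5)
Your proposal is correct, and your reduction to $\alpha = \frac{1}{2^{2k}}\sum_{R}|R^I(R)|$ (summing the ideal-region size once per protocol-induced tile) is exactly the paper's starting point (its Eq.~(2)). Where you diverge is in how that sum is evaluated. The paper exploits the self-similar quadrant structure of the induced tiling: it sets up linear recurrences in the number of bisections $c$ for the total tile count $a_{c,i}$ and for $b_{c,i}=\sum_R (2^k-|R^I(R)|)$, using the fact that the upper-left and lower-right quadrants are congruent and the off-diagonal quadrants contribute $2^{c+i}$ fully-resolved loser strips each, then solves the recurrences in closed form. You instead stratify the tiles directly by the level $j$ of the first separating bisection (giving $2^{j-1}$ blocks each contributing strips $U\times\{v\}$ or $\{v\}\times U$) plus the $2^g$ diagonal blocks finished by the English auction, and sum the resulting arithmetic series; I have checked that your level-$j$ contribution $2^{2k-1}+2^{2k-j-1}$ and your diagonal contribution $2^{2k}+2^{k-1}-2^{g+k-1}$ are both right and combine to the stated formula. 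Your structural claims (the winner's half is never refined; resolving the loser by bisection versus English auction induces the same tiles; tie-breaking removes one bidder-$2$ region per diagonal block) are consistent with the paper's description of the induced tiling. The trade-off: your direct enumeration makes the provenance of each term in the answer transparent and avoids having to guess and verify closed forms for the recurrences, while the paper's recurrence scheme is more mechanical and is reused almost verbatim for the subjective-PAR computations in Theorems~\ref{thm:bba-2spa-s1} and~\ref{thm:bba-2spa-s2}, which is presumably why it was chosen.
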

\begin{proof}
Fix $k$, the number of bits used for bidding, and let $c=g(k)$ be the number of bisections; we have $0\leq c\leq k$, and we let $i=k-c$.  Figure~\ref{fig:2spa-ci} illustrates this tiling for $k=4$, $c=2$, and $i=2$; note that the upper-left and lower-right quadrants have identical structure and that the lower-left and upper-right quadrants have no structure other than that of the ideal partition and the quadrant boundaries (which are induced by the first bisection operation performed).
\begin{figure}[htp]
\begin{center}
\includegraphics[height=1.5in]{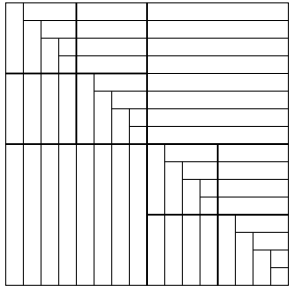}
\end{center}
\caption{Illustration for the proof of Thm.~\ref{thm:2spa-par-bba}}\label{fig:2spa-ci}
\end{figure}

Our general approach is the following.  The average-case objective PAR with respect to the uniform distribution equals
\[
\mathsf{PAR} = \frac{1}{2^{2k}} \sum_{(x_1,x_2)} \frac{|R^I(x_1,x_2)|}{|R^{P}(x_1,x_2)|},
\]
where the sum is over all pairs $(x_1,x_2)$ in the value space; recall that $R^I(x_1,x_2)$ is the region in the ideal partition that contains $(x_1,x_2)$, and $R^P(x_1,x_2)$ is the rectangle in the tiling induced by the protocol that contains $(x_1,x_2)$.  We may combine all of the terms corresponding to points in the same protocol-induced rectangle to obtain
\begin{equation}
\mathsf{PAR} = \frac{1}{2^{2k}} \sum_{R} |R|\frac{|R^I(R)|}{|R|} = \frac{1}{2^{2k}} \sum_{R} |R^I(R)|,\label{eq:par2}
\end{equation}
where the sums are now over protocol-induced rectangles $R$ (we will simplify notation and write $R$ instead of $R^P$), and $R^I(R)$ denotes the ideal region that contains the protocol-induced rectangle $R$.  Each ideal region in which bidder $1$ wins is a rectangle of width $1$ and height at most $2^k$; each ideal region in which bidder $2$ wins is a rectangle of height $1$ and width strictly less than $2^k$.  For a protocol-induced rectangle $R$, let $j_R = 2^k - |R^I(R)|$.  Let $a_{c,i}$ be the total number of tiles that appear in the tiling of the $k$-bit value space induced by the \bba with $g(k)=c$, and let $b_{c,i} = \sum_{R} j_R$ (with this sum being over the protocol-induced tiles in this same partition).  Then we may rewrite~(\ref{eq:par2}) as
\begin{equation}
\mathsf{PAR}_{c,i} = \frac{1}{2^{2k}} \sum_{R} (2^k-j_R) = \frac{a_{c,i}2^k - b_{c,i}}{2^{2k}}.\label{eq:par3}
\end{equation}
(Note that~(\ref{eq:par2}) holds for general protocols; we now add the subscripts ``$c,i$'' to indicate the particular PAR we are computing.)\ \ We now determine $a_{c,i}$ and $b_{c,i}$.

Considering the tiling induced by $c+1$ bisections of a $c+i+1$-bit space (which has $a_{c+1,i}$ total tiles), the upper-left and lower-right quadrants each contain $a_{c,i}$ tiles while the lower-left and upper-right quadrants (as depicted in Fig.~\ref{fig:2spa-ci}) each contribute $2^{c+i}$ tiles, so $a_{c+1,i} = 2a_{c,i}+2^{c+i+1}$.  When there are no bisections, the $i$-bit value space has $a_{0,i}=2^{i+1}-1$ tiles, from which we obtain $a_{c,i} = 2^c \left(2^i (c+2)-1\right)$.  The sum of $j_R$ over protocol-induced rectangles $R$ in the upper-left quadrant is $b_{c,i}$.  For a rectangle $R$ in the lower-right quadrant, $j_R$ equals $2^{c+i}$ plus $j_{R'}$, where $R'$ is the corresponding rectangle in the upper-left quadrant; there are $a_{c,i}$ such $R$, so the sum of $j_R$ over protocol-induced rectangles $R$ in the upper-left quadrant is $b_{c,i}+a_{c,i}2^{c+i}$.  Finally, the sum of $j_R$ over $R$ in the lower-left quadrant equals $\sum_{h=0}^{2^{c+i}-1} h$ and the sum over $R$ in the top-right quadrant equals $\sum_{h=1}^{2^{c+i}} h$.  Thus, $b_{c+1,i} = 2b_{c,i}+a_{c,i}2^{c+i} + 2^{2(c+i)}$; with $b_{0,i} = \sum_{h=0}^{2^i-1} h + \sum_{h=1}^{2^i-1} h$, we obtain $b_{c,i} = 2^{c+i-1} \left(\left(1+2^c\right) \left(-1+2^i\right)+2^{c+i} c\right)$.  Rewriting~(\ref{eq:par3}), we obtain
\[
\mathsf{PAR}_{c,i} = \frac{c+3}{2}-\frac{2^{c}}{2^{c+i+1}}+\frac{1}{2^{c+i+1}}-\frac{1}{2^{c+1}}.
\]
Recalling that $k=c+i$, the proof is complete.
\end{proof}

For the protocols corresponding to values of $g(k)$
ranging from $0$ to $k$ (ranging from the ascending-price
English auction to the \ba), we may thus relate the amount of
communication saved (relative to the English auction) to the effect of this
on the PAR.
\begin{cor}\label{cor:bba}
Let $g$ be a function that maps non-negative integers to non-negative
integers. Then the average-case objective PAR with respect to the
uniform distribution for the \bba is bounded
if $g$ is bounded and is unbounded if $g$ is unbounded.
We then have that the \bba may require the exchange of $\Theta(k+2^{k-g(k)})$
bits, and it has an average-case objective PAR of $\Theta(1+g(k))$.
\end{cor}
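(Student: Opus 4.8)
The plan is to derive every assertion directly from the closed form supplied by Theorem~\ref{thm:2spa-par-bba}, which gives the average-case objective PAR of the \bba as
\[
\frac{g(k)+3}{2}-\frac{2^{g(k)}}{2^{k+1}}+\frac{1}{2^{k+1}}-\frac{1}{2^{g(k)+1}}.
\]
First I would note that, since the \bba is only defined when $0\le g(k)\le k$, each of the last three terms lies in $[0,\tfrac12]$: the inequality $2^{g(k)}\le 2^k$ gives $\frac{2^{g(k)}}{2^{k+1}}\le\tfrac12$, and the other two are plainly in $(0,\tfrac12]$. Hence the PAR equals $\frac{g(k)+3}{2}$ up to an additive error of absolute value at most $1$, so it is $\Theta(1+g(k))$. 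This single observation yields both the growth rate and the dichotomy: if $g$ is bounded then the leading term, and therefore the PAR, is bounded, whereas if $g$ is unbounded then the PAR is at least $\frac{g(k)+3}{2}-1$ and hence unbounded.

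For the communication claim I would analyze the two phases of the \bba separately. The bisection phase performs at most $g(k)\le k$ bisection steps, each costing a constant number of bits; an input whose two values remain in a common subinterval throughout forces all $g(k)$ steps, so this phase costs $\Theta(g(k))$ bits in the worst case. If the outcome is unresolved after the bisections, both values lie in one interval of size $2^{k-g(k)}$, on which the ascending-price English auction runs; sweeping the price one unit at a time costs $O(2^{k-g(k)})$ bits always and $\Omega(2^{k-g(k)})$ bits for an input that pushes the price through almost the entire interval. Including the $O(k)$-bit closing message that reports the outcome, the worst-case communication is $O(k+2^{k-g(k)})$ from above and $\Omega(g(k)+2^{k-g(k)})$ from below. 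These bounds coincide: when $2^{k-g(k)}\ge k$ the exponential term dominates both, and otherwise $g(k)>k-\log_2 k$ forces $g(k)=\Theta(k)$, so in either regime the complexity is $\Theta(k+2^{k-g(k)})$.

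As sanity checks, $g(k)=0$ recovers the ascending English auction ($\Theta(2^k)$ bits, unbounded PAR) and $g(k)=k$ recovers the \ba ($\Theta(k)$ bits, PAR $\tfrac{k}{2}+\Theta(1)$, matching Theorem~\ref{thm:ba-2spa}). The PAR portion of the corollary is entirely routine once the three explicit terms of Theorem~\ref{thm:2spa-par-bba} are bounded. I expect the only delicate point to be the communication estimate: one must verify the matching lower bound by exhibiting, for each admissible $g$, an input forcing $\Omega(2^{k-g(k)})$ price increments (and, when $g(k)$ is near $k$, an input forcing $\Omega(k)$ bisection bits), and then check that the lower and upper bounds reconcile to $\Theta(k+2^{k-g(k)})$ across all regimes of $g$.
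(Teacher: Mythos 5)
Your proposal is correct and follows the same route the paper intends: the paper states this corollary without a separate proof, treating both the PAR dichotomy and the $\Theta(1+g(k))$ bound as immediate consequences of the closed form in Theorem~\ref{thm:2spa-par-bba}, exactly as you derive them, and the communication bound as a direct count of the $O(g(k))$ bisection bits plus the worst-case $\Theta(2^{k-g(k)})$ English-auction sweep and the $O(k)$-bit closing message. Your bounding of the three correction terms (net error in $[-1,\tfrac12]$) and your reconciliation of the upper and lower communication bounds across the two regimes of $g$ supply precisely the details the paper leaves implicit.
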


\begin{rem}
Some of the sequences that appear in the proof above also appear in other settings.  For example, the sequences $\{a_{0,i}\}_i$, $\{a_{1,i}\}_i$, and $\{a_{2,i}\}_i$ are slightly shifted versions of sequences A000225, A033484, and A028399, respectively, in the OEIS~\cite{oeis}, which notes other combinatorial interpretations of them.
\end{rem}

\subsubsection{Subjective privacy for bounded-bisection auctions}

\begin{thm}\label{thm:bba-2spa-s1}\rm{\bf{(The average-case PAR w.r.t.\ $1$ of the bounded-bi\-section auction)}}\ \
The average-case PAR with respect to $1$ of the \bba is \[\frac{g(k)+5}{4}-\frac{1}{2^{g(k)+2}} - \frac{1}{2^{k-g(k)+1}} - \frac{g(k)-2}{2^{k+2}}\] with respect to the uniform distribution.
\end{thm}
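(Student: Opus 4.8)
The plan is to mirror the proof of Thm.~\ref{thm:2spa-par-bba}, but to account for the refinement by the $1$-partition, which treats the two possible winners asymmetrically. Writing $c=g(k)$ and $i=k-c$, I would start from
\[
\mathsf{PAR}_1 = \frac{1}{2^{2k}}\sum_{(x_1,x_2)} \frac{|R_1^I(x_1,x_2)|}{|R_1^P(x_1,x_2)|}
\]
and split the sum according to the winner. The key structural observation is that every monochromatic tile (ideal or protocol-induced) on which bidder $1$ wins has width $1$, since its output fixes the loser's value $x_2$; hence $1$-partitioning collapses it to singletons, so $|R_1^I|=|R_1^P|=1$ on every such cell and each contributes a ratio of exactly $1$. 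The number of these cells is $\sum_{x_1}(x_1+1)=2^{2k-1}+2^{k-1}$, which is the entire contribution of the region in which bidder $1$ wins.

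For the region in which bidder $2$ wins, every tile has height $1$ (its output fixes the loser's value $x_1$), so $1$-partitioning leaves it unchanged, and summing the ratio over the $|R|$ cells of a single protocol tile $R$ lying in row $x_1$ gives $|R|\cdot\frac{2^k-1-x_1}{|R|}=2^k-1-x_1$, the width of the ideal region for that row. Thus the whole ``bidder $2$ wins'' contribution is $\sum_{R}(2^k-1-x_1(R))$, summed over protocol-induced ``$2$-wins'' tiles $R$, where $x_1(R)$ denotes the constant row of $R$. I would then compute this quantity with the same quadrant recursion used in Thm.~\ref{thm:2spa-par-bba}: the first bisection of an on-diagonal block of side $2^{c+i+1}$ produces two on-diagonal recursive sub-blocks of side $2^{c+i}$ (each a $c$-bisection instance), an off-diagonal all-``$1$-wins'' quadrant contributing no $2$-wins tiles, and an off-diagonal all-``$2$-wins'' quadrant contributing exactly one tile per row. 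This yields $a'_{c+1,i}=2a'_{c,i}+2^{c+i}$ (with $a'_{0,i}=2^i-1$) for the number of $2$-wins tiles, together with a companion recursion for the weighted sum $\sum_R(2^k-1-x_1(R))$.

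The main obstacle is that, unlike the bare tile count, the weighted sum depends on the \emph{global} position of each sub-block, because $2^k-1-x_1$ is measured against the full value space and the upper recursive quadrant sits at row-offset $2^{c+i}$ relative to the lower one. I would handle this by carrying the block offset $\rho$ as a parameter and noting that the weighted sum is \emph{linear} in $\rho$: both the base case $\sum_{x_1=\rho}^{\rho+2^m-2}(2^k-1-x_1)$ and the per-row arithmetic-progression contribution of the flat $2$-wins quadrant are linear in $\rho$, so the offset-dependence stays tractable and in fact the coefficient of $\rho$ obeys exactly the recursion for $a'_{c,i}$. Solving the two linear recursions (the constant term requires summing geometric and arithmetico-geometric terms in $c$), adding back the $2^{2k-1}+2^{k-1}$ from the ``$1$-wins'' region, dividing by $2^{2k}$, and substituting $i=k-c$ should collapse to the stated closed form $\frac{g(k)+5}{4}-\frac{1}{2^{g(k)+2}}-\frac{1}{2^{k-g(k)+1}}-\frac{g(k)-2}{2^{k+2}}$; the bookkeeping in this final simplification is the only delicate part.
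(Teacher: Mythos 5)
Your proposal is correct and follows essentially the same route as the paper: split the sum by winner, observe that every cell where bidder $1$ wins contributes ratio exactly $1$ (there are $2^{2k-1}+2^{k-1}$ of them, the paper's $z_{c,i}$), reduce the bidder-$2$ contribution to $\sum_R\bigl(2^k-1-x_1(R)\bigr)$ over height-one tiles, and evaluate the tile count and the weighted sum by the same quadrant recursion (your $a'_{c,i}$ is the paper's $x_{c,i}$). Your device of carrying the row offset $\rho$ and exploiting linearity is just a reparametrization of the paper's substitution $j_R = 2^k - |R^I(R)|$, under which the shifted diagonal quadrant contributes $y_{c,i}+2^{c+i}x_{c,i}$, so the two computations coincide.
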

\begin{proof}
The approach is similar to that in the proof of Thm.~\ref{thm:2spa-par-bba}.  We start by specializing~(\ref{eq:par2}) to the present case.

Each ideal region in which bidder $1$ wins is a rectangle of size $1$; each ideal region in which bidder $2$ wins is a rectangle of height $1$ and width strictly less than $2^k$.  For a protocol-induced rectangle $R$, let $j_R = 2^k - |R^I(R)|$.  Let $c=g(k)$ and let $i=k-c\geq 0$.  Let $T_{c,i}^1$ be the refinement of the \twoSPhA $1$-partition of the $k$-bit value space induced by the \bhba.  Let $x_{c,i}$ be the number of rectangles in $T_{c,i}^1$ in which bidder $2$ (the column player) wins, and let $y_{c,i}$ be the sum, over all rectangles $R$ in which bidder $2$ wins, of the quantity $2^{c+i}-|R^I(R)|$.  Let $z_{c,i}$ be the number of rectangles $R$ in which bidder $1$ (the row player) wins.

Using $\mathsf{PAR}^1_{c,i}$ to denote the PAR w.r.t.\ bidder $1$ in this case ($c$ bisections and $i=k-c$), we may rewrite~(\ref{eq:par2}) as
\begin{eqnarray*}
\mathsf{PAR}^1_{c,i} &=& \frac{1}{2^{2(c+i)}} \left[\left(\sum_{\substack{R^P\mathrm{\ in\ which}\\ 1\mathrm{\ wins}}}
|R^I(R^P)|\right) +  \left(\sum_{\substack{R^P\mathrm{\ in\ which}\\ 2\mathrm{\ wins}}}
|R^I(R^P)|\right)\right]\\
 &=&\frac{1}{2^{2(c+i)}}\left[\left(z_{c,i}\right)+\left(2^{c+i}x_{c,i} - y_{c,i}\right)\right].
\end{eqnarray*}
We now turn to the computation of $x_{c,i}$, $y_{c,i}$, and $z_{c,i}$.

Following the same approach as in the proof of Thm.~\ref{thm:2spa-par-bba}, we have $x_{c+1,i} = 2x_{c,i} + 2^{c+i}$, $y_{c+1,i}=2 y_{c,i} + \sum_{j=1}^{2^{c+i}} j + 2^{c+i}x_{c,i}$, and $z_{c+1,i} = 2z_{c,i} + 2^{2(c+i)}$.  With $x_{0,i}=2^i-1$, $y_{0,i}=\sum_{j=1}^{2^i-1} j$, and $z_{0,i} = \sum_{j=1}^{2^{c+1}} j$, we obtain
\begin{eqnarray*}
x_{c,i} &=& 2^{c-1}\left(2^i c + 2^{i+1} - 2\right),\\
y_{c,i} &=& 2^{c+i-2}\left(2^{c+i}c +2^{c+i} +2^i - 2^{c+1} + c\right),\mathrm{\ and}\\
z_{c,i} &=& 2^{c+i-1}(2^{c+i}+1).
\end{eqnarray*}
Using these in our expression for $\mathsf{PAR}^1_{c,i}$, we obtain
\[
\mathsf{PAR}^1_{c,i} = \frac{c+5}{4} + \frac{2-c}{2^{c+i+2}} - \frac{1}{2^{i+1}} - \frac{1}{2^{c+2}}.
\]
Recalling that $k=c+i$ and $g(k)=c$ completes the proof.
\end{proof}

\begin{thm}\label{thm:bba-2spa-s2}\rm{\bf{(The average-case PAR w.r.t.\ $2$ of the bounded-bi\-section auction)}}\ \
The average-case PAR with respect to $1$ of the \bba is \[\frac{g(k)+5}{4}-\frac{1}{2^{g(k)+2}} + \frac{g(k)}{2^{k+2}}\] with respect to the uniform distribution.
\end{thm}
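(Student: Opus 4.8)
The plan is to mirror the proof of Thm~\ref{thm:bba-2spa-s1}, interchanging the roles of the two bidders while tracking the asymmetry introduced by the tie-breaking rule (bidder $1$ wins ties). As there, I fix $k$, set $c=g(k)$ and $i=k-c$, and work with the $2$-induced tiling $T^2_{c,i}$ of the \bba together with the $2$-ideal partition. After $2$-partitioning, each ideal region in which bidder $2$ wins becomes a single cell, while each ideal region in which bidder $1$ wins is an unsplit column of width $1$ and height at most $2^{c+i}$. Specializing~(\ref{eq:par2}) and splitting the sum over protocol-induced rectangles according to the winner, I would write
\[
\mathsf{PAR}^2_{c,i} = \frac{1}{2^{2(c+i)}}\left[z'_{c,i} + \left(2^{c+i}x'_{c,i} - y'_{c,i}\right)\right],
\]
where $x'_{c,i}$ is the number of rectangles of $T^2_{c,i}$ in which bidder $1$ wins, $z'_{c,i}$ is the number in which bidder $2$ wins, and $y'_{c,i}=\sum_{R:\,1\text{ wins}}\left(2^{c+i}-|R^I(R)|\right)$.

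Next I would set up recurrences in $c$ via the quadrant decomposition of Thm~\ref{thm:2spa-par-bba}: one extra bisection of a $(c+i+1)$-bit space produces two diagonal quadrants (both values in the same half), each carrying the $c$-bisection structure of a $(c+i)$-bit space, and two off-diagonal quadrants in which the winner is already fixed and only the loser's value must be found. Bidder $1$'s decided quadrant contributes $2^{c+i}$ width-$1$ columns, giving $x'_{c+1,i}=2x'_{c,i}+2^{c+i}$; bidder $2$'s decided quadrant contributes rows of width $2^{c+i}$ that $2$-partition into cells, giving $z'_{c+1,i}=2z'_{c,i}+2^{2(c+i)}$. For $y'$ the delicate point is the reference-height bookkeeping across quadrants: each bidder-$1$ column in the upper diagonal quadrant picks up an extra $2^{c+i}$ (its ideal height is unchanged while the reference grows from $2^{c+i}$ to $2^{c+i+1}$), columns in the lower diagonal quadrant contribute exactly as in the sub-problem, and the decided quadrant contributes $\sum_{j=0}^{2^{c+i}-1} j$; this yields $y'_{c+1,i}=2y'_{c,i}+2^{c+i}x'_{c,i}+\sum_{j=0}^{2^{c+i}-1}j$. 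The base cases (at $c=0$, where the auction is the perfectly private ascending auction and $T^2_{0,i}$ is the $2$-ideal partition) are $x'_{0,i}=2^i$, $z'_{0,i}=\sum_{j=1}^{2^i-1}j$, and $y'_{0,i}=\sum_{j=0}^{2^i-1}j$.

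The remaining work is routine. Solving the linear recurrences gives $x'_{c,i}=2^{c+i-1}(c+2)$ and $z'_{c,i}=2^{c+i-1}(2^{c+i}-1)$ in closed form, together with a similar (slightly longer) expression for $y'_{c,i}$. Substituting into the displayed formula for $\mathsf{PAR}^2_{c,i}$, simplifying, and recalling that $k=c+i$ and $g(k)=c$ yields the claimed value $\frac{g(k)+5}{4}-\frac{1}{2^{g(k)+2}}+\frac{g(k)}{2^{k+2}}$. As a consistency check, taking $g(k)=k$ (so $i=0$) recovers the \ba formula $\frac{k+5}{4}+\frac{k-1}{2^{k+2}}$ of Thm~\ref{thm:ba-2spa-s2}.

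The main obstacle is pinning down the $y'$ bookkeeping, which differs from the proof of Thm~\ref{thm:bba-2spa-s1} in two coupled ways: the tie-breaking rule shifts the decided-quadrant sum from $\sum_{j=1}^{2^{c+i}}j$ to $\sum_{j=0}^{2^{c+i}-1}j$ (and correspondingly the base count $x'_{0,i}$ from $2^i-1$ to $2^i$), and one must correctly account for the $2^{c+i}$ increase in the height against which $|R^I(R)|$ is measured in the upper diagonal quadrant but not in the lower one. Once these are handled, the rest is a direct transcription of the argument for Thm~\ref{thm:bba-2spa-s1}.
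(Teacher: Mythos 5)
Your proposal is correct and follows essentially the same route as the paper's own proof: it specializes~(\ref{eq:par2}) to the $2$-induced tiling, splits the sum by winner, and sets up exactly the paper's recurrences (your $x'_{c,i}, y'_{c,i}, z'_{c,i}$ are the paper's $u_{c,i}, v_{c,i}, w_{c,i}$, with matching recurrences, base cases, and closed forms). The extra discussion of the quadrant bookkeeping and of how the tie-breaking rule shifts the decided-quadrant sums relative to the w.r.t.-$1$ case is consistent with, and slightly more explicit than, what the paper records.
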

\begin{proof}
The approach is essentially the same as in the proof of Thm.~\ref{thm:bba-2spa-s1}, although the induced partition differs slightly.

Let $c=g(k)$ and let $i=k-c\geq 0$.  Let $T_{c,i}^2$ be the refinement of the \twoSPhA $2$-partition of the $k$-bit value space induced by the \bhba.  Let $u_{c,i}$ be the number of rectangles in $T_{c,i}^2$ in which bidder $1$ (the row player) wins, and let $v_{c,i}$ be the sum, over all rectangles $R$ in which bidder $1$ wins, of the quantity $2^{c+i}-|R^I(R)|$.  Let $w_{c,i}$ be the number of rectangles $R$ in which bidder $2$ (the column player) wins.  Using $\mathsf{PAR}^2_{c,i}$ to denote the PAR w.r.t.\ bidder $2$ in this case ($c$ bisections and $i=k-c$), we may rewrite~(\ref{eq:par2}) as
\[
\mathsf{PAR}^2_{c,i} = \frac{1}{2^{2(c+i)}}\left[\left(2^{c+i} u_{c,i} - v_{c,i}\right)+\left(w_{c,i}\right)\right].
\]

Mirroring the approach of the proof of Thm.~\ref{thm:bba-2spa-s1}, we have $u_{c+1,i} = 2u_{c,i} + 2^{c+i}$, $v_{c+1,i} = 2v_{c,i} + 2^{c+i-1}(2^{c+i}-1) + 2^{c+i}u_{c,i}$, and $w_{c,i} = 2^{c+i-1}(2^{c+i}-1)$.  With $u_{0,i} = 2^i$ and $v_{0,i} = 2^{i-1}(2^i-1)$, we obtain
\begin{eqnarray*}
u_{c,i} &=& 2^{c+i-1}(c+2),\\
v_{c,i} &=& 2^{c+i-2}\left(2^{c+i}(c+1)+2^i-c-2\right),\mathrm{\ and}\\
w_{c,i} &=& 2^{c+i-1}(2^{c+i}-1).
\end{eqnarray*}
Using these in our expression for $\mathsf{PAR}^2_{c,i}$, we obtain
\[
\mathsf{PAR}^2_{c,i} = \frac{c+5}{4} - \frac{1}{2^{c+2}} + \frac{c}{2^{c+i+2}}.
\]
Recalling that $k=c+i$ and $g(k)=c$ completes the proof.
\end{proof}

Because $g(k)\geq 0$, the average-case PAR with respect to $2$ is at least as large as the average-case PAR with respect to $1$; this gives the average-case subjective PAR of the \bba as follows.
\begin{cor}\label{cor:bba-2spa-sub}\rm{\bf{(Av\-er\-age-case subjective PAR of the bounded-bi\-section auction)}}\ \
The av\-er\-age-case subjective PAR of the \bba is \[\frac{g(k)+5}{4}-\frac{1}{2^{g(k)+2}} + \frac{g(k)}{2^{k+2}}\] with respect to the uniform distribution.
\end{cor}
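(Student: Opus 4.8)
The plan is to derive this corollary directly from Theorems~\ref{thm:bba-2spa-s1} and~\ref{thm:bba-2spa-s2}, which supply closed forms for the average-case PAR of the \bba with respect to bidder~$1$ and with respect to bidder~$2$, respectively, together with the definition of average-case subjective PAR as the maximum, over the two bidders, of the average-case PAR with respect to that bidder. Under this definition the whole task reduces to deciding which of the two closed-form expressions is larger and then reporting it. Since the formula in the statement coincides verbatim with the expression of Theorem~\ref{thm:bba-2spa-s2}, I expect to show that the PAR with respect to~$2$ dominates the PAR with respect to~$1$ for every admissible~$g(k)$.

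The key step is a single sign computation. Writing $c=g(k)$ and subtracting the Theorem~\ref{thm:bba-2spa-s1} expression from the Theorem~\ref{thm:bba-2spa-s2} expression, the common leading terms $\frac{c+5}{4}-\frac{1}{2^{c+2}}$ cancel and the difference collapses to
\[
\mathsf{PAR}^2 - \mathsf{PAR}^1 = \frac{c-1}{2^{k+1}} + \frac{1}{2^{k-c+1}}.
\]
I would then verify nonnegativity by a short case split on~$c$. When $c\geq 1$ both summands are nonnegative (the first because $c-1\geq 0$, the second because it is a positive power of~$\tfrac{1}{2}$), so the difference is strictly positive. When $c=0$ the two summands are $-\tfrac{1}{2^{k+1}}$ and $+\tfrac{1}{2^{k+1}}$, so they cancel and the difference is exactly~$0$; this corresponds to the perfectly private ascending English auction, for which one checks that both PAR expressions equal~$1$, consistent with equality here.

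Having established $\mathsf{PAR}^2\geq\mathsf{PAR}^1$ for all $g(k)\geq 0$, I conclude that the maximum of the two equals $\mathsf{PAR}^2$, which is precisely the formula in the statement. There is no genuine obstacle: the only points requiring care are the algebraic simplification of the difference---in particular correctly tracking the $2^{k-c+1}$ term, which appears only in the with-respect-to-$1$ formula---and the observation that the single boundary case $g(k)=0$ yields equality rather than a strict inequality. Thus the statement follows as a direct corollary, with no machinery needed beyond the two preceding theorems.
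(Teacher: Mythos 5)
Your proposal is correct and matches the paper's argument: the paper likewise obtains the corollary by taking the maximum of the two expressions from Theorems~\ref{thm:bba-2spa-s1} and~\ref{thm:bba-2spa-s2} and noting that, since $g(k)\geq 0$, the PAR with respect to $2$ dominates. Your explicit computation of the difference $\frac{c-1}{2^{k+1}}+\frac{1}{2^{k-c+1}}$ and the $c=0$ boundary check simply spell out what the paper asserts in one line.
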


Finally, Table~\ref{tab:2spa} summarizes the average-case PAR results (with respect to the uniform distribution) for \twoSPA.
\begin{table}[htp]
\begin{center}
\begin{tabular}{|c|c|c|}
  \hline
   & Avg.-Case Obj.\ PAR & Avg.-Case Subj.\ PAR \\ \hline
  English Auction & $1$ & $1$ \\ \hline
  \bba & $\frac{g(k)+3}{2}-\frac{2^{g(k)}}{2^{k+1}}+$ & $\frac{g(k)+5}{4}-\frac{1}{2^{g(k)+2}} + $ \\
   & $\qquad\frac{1}{2^{k+1}}-\frac{1}{2^{g(k)+1}}$ & $\qquad\qquad\frac{g(k)}{2^{k+2}}$\\ \hline
  \ba & $\frac{k}{2}+1$ & $\frac{k+5}{4}+\frac{k-1}{2^{k+2}}$\\ \hline
  Sealed-Bid Auction & $\frac{2^{k+1}}{3}+\frac{1}{3\cdot 2^k}$ & $\frac{2^k}{3}+1-\frac{1}{3\cdot 2^k}$\\
  \hline
\end{tabular}
\end{center}
\caption{Average-case PARs (with respect to the uniform distribution) for \twoSPA}\label{tab:2spa}
\end{table}

\section{Discussion and Future Directions}\label{section-discussion}

\subsection{Other Notions of Approximate Privacy}

By our definitions, the worst-case/average-case PARs of a protocol
are determined by the worst-case/expected value of the expression
$\frac{|R^I(\mathbf{x})|}{|R^P(\mathbf{x})|}$, where $R^P(\mathbf{x})$ is the
monochromatic rectangle induced by $P$ for input $\mathbf{x}$, and
$R^I(\mathbf{x})$ is the monochromatic region containing
$A(f)_{\mathbf{x}}$ in the ideal monochromatic partition of $A(f)$.
That is, informally, we are interested in the ratio of the
\emph{size} of the ideal monochromatic region for a specific pair of inputs to the \emph{size} of the monochromatic rectangle induced by the protocol for that pair. More generally, we can define worst-case/average-case PARs with respect to a function $g$ by considering the ratio
$\frac{g(R^I(\mathbf{x}),\mathbf{x})}{g(R^P(\mathbf{x}),\mathbf{x})}$.  Our definitions of PARs set $g(R,\mathbf{x})$ to be the cardinality of $R$.  This captures the intuitive notion of the
indistinguishability of inputs that is natural to consider in the
context of privacy preservation. Other definitions of PARs may be
appropriate in analyzing other notions of privacy.  We suggest a few here; further
investigation of these and other definitions provides many interesting avenues for future work.

\textbf{Probability mass.}\ \ Given a probability distribution $D$ over the parties' inputs, a
seemingly natural choice of $g$ is the probability mass. That is,
for any region $R$, $g(R)=Pr_D(R)$, the probability (according to
$D$) that the input corresponds to an entry in $R$. However, a
simple example illustrates that this intuitive choice of
$g$ is problematic: Consider a problem for which $\{0,\ldots,n\}\times\{i\}$
is a maximal monochromatic region for $0\leq i\leq n-1$ as illustrated in the left part of Fig.~\ref{fig:prob-mass}.  Let $P$ be the communication protocol consisting
of a single round in which party $1$ reveals whether or not his value is $0$;
this induces the monochromatic tiling with tiles $\{(0,i)\}$ and $\{(1,i),\ldots,(n,i)\}$ for each $i$ as illustrated in the right part of Fig.~\ref{fig:prob-mass}.  Now, let $D_1$ and $D_2$ be the probability distributions over the inputs $\mathbf{x}=(x_1,x_2)$ such that, for $0\leq i\leq n-1$ and $1\leq j\leq n$, $Pr_{D_1}[(x_1,x_2)=(0,i)]=\frac{\epsilon}{n}$, $Pr_{D_1}[(x_1,x_2)=(j,i)]=\frac{1-\epsilon}{n^2}$, $Pr_{D_2}[(x_1,x_2)=(0,i)]=\frac{1-\epsilon}{n}$, and $Pr_{D_2}[(x_1,x_2)=(j,i)]=\frac{\epsilon}{n^2}$ for some small $\epsilon>0$. Intuitively, any
reasonable definition of PAR should imply that, for $D_1$, $P$
provides ``bad'' privacy guarantees (because w.h.p.~it reveals the
value of $x_1$),
and, for $D_2$, $P$ provides ``good''
privacy (because w.h.p.~it reveals little
about $x_1$).
In sharp contrast, choosing $g$ to be the probability mass results
in the same average-case PAR in both cases.

\begin{figure}[htp]
\begin{center}
\includegraphics[width=4in]{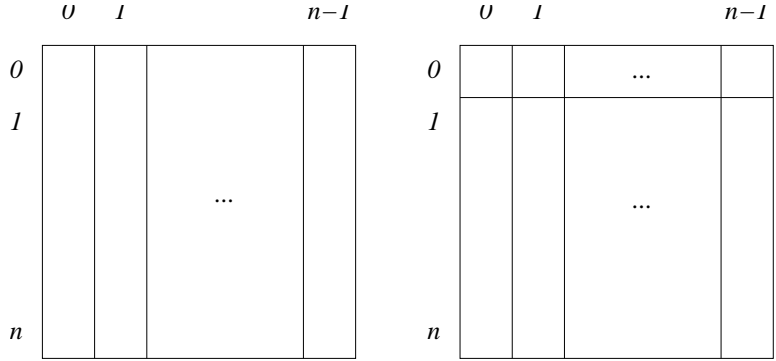}
\caption{\small Maximal monochromatic regions (left) and protocol-induced rectangles (right) for an example showing the deficiencies of PAR definitions based on probability mass. }\label{fig:prob-mass}
\end{center}
\end{figure}

\textbf{Other additive functions.}\ \ In our definition of PAR and in the probability-mass approach, each input $\mathbf{x}$ in a rectangle contributes to $g(R,\mathbf{x})$ in a way that is independent of the other inputs in $R$.  Below, we discuss some natural approaches that violate this condition, but we start by noting that other functions that satisfy this condition may be of interest.  For example, taking $g(R,\mathbf{x}) = 1+\sum_{\mathbf{y}\in R\setminus \mathbf{x}} d(\mathbf{x},\mathbf{y})$, where $d$ is some distance defined on the input space, gives our original definition of PAR when $d(x,y) = 1-\delta_{\mathbf{x},\mathbf{y}}$ and might capture other interesting definitions (in which indistinguishable inputs that are farther away from $\mathbf{x}$ contribute more to the privacy for $\mathbf{x}$).  (The addition of $1$ ensures that the ratio $g(R^I,\mathbf{x})/g(R^P,\mathbf{x})$ is defined, but that can be accomplished in other ways if needed.)\ \ Importantly, here and below, the notion of distance that is used might not be a Euclidean metric on the $n$-player input space $[0,2^k-1]^n$.  It could instead (and likely would) focus on the problem-specific interpretation of the input space.  Of course, there are may possible variations on this (\eg, also accounting for the probability mass).

\textbf{Maximum distance.}\ \ We might take the view that a protocol does not reveal much about an input $\mathbf{x}$ if there is another input that is ``very different'' from $\mathbf{x}$ that the protocol cannot distinguish from $\mathbf{x}$ (even if the total number of things that are indistinguishable from $\mathbf{x}$ under the protocol is relatively small).  For some distance $d$ on the input space, we might than take $g$ to be something like $1+\max_{\mathbf{y}\in R\setminus\{\mathbf{x}\}} d(\mathbf{y},\mathbf{x})$.

\textbf{Plausible deniability.}\ \ One drawback to the maximum-distance approach is that it does not account for the probability associated with inputs that are far from $\mathbf{x}$ (according to a distance $d$) and that are indistinguishable from $\mathbf{x}$ under the protocol.  While there might be an input $\mathbf{y}$ that is far away from $\mathbf{x}$ and indistinguishable from $\mathbf{x}$, the probability of $\mathbf{y}$ might be so small that the observer feels comfortable assuming that $\mathbf{y}$ does not occur.  A more realistic approach might be one of ``plausible deniability.''\ \ This makes use of a plausibility threshold---intuitively, the minimum probability that the ``far away'' inputs(s) (which is/are indistinguishable from $\mathbf{x}$) must be assigned in order to ``distract'' the observer from the true input $\mathbf{x}$.  This threshold might correspond to, \eg, ``reasonable doubt'' or other levels of certainty.  We then consider how far we can move away from $\mathbf{x}$ while still having ``enough'' mass (\ie, more than the plausibility threshold) associated with the elements indistinguishable from $\mathbf{x}$ that are still farther away.  We could then take $g$ to be something like $1+\max \{d_0 | Pr_D(\{\mathbf{y}\in R|d(\mathbf{y},\mathbf{x})\geq d_0\})/Pr_D(R) \geq t\}$; other variations might focus on mass that is concentrated in a particular direction from $\mathbf{x}$.  (In quantifying privacy, we would expect to only consider those $R$ with positive probability, in which case dividing by $Pr_D(R)$ would not be problematic.)\ \ Here we use $Pr_D(R)$ to normalize the weight that is far away from $x$ before comparing it to the threshold $t$; intuitively, an observer would know that the value is in the same region as $x$, and so this seems to make the most sense.

\textbf{Relative rectangle size.}\ \ One observation is that a bidder likely has a very different view of an auctioneer's being able to tell (when some particular protocol is used) whether his bid lies between $995$ and $1005$ than he does of the auctioneer's being able to tell whether his bid lies between $5$ and $15$.  In each case, however, the bids in the relevant range are indistinguishable under the protocol from $11$ possible bids.  In particular, the privacy gained from an input's being distinguishable from a fixed number of other inputs may (or may not) depend on the context of the problem and the intended interpretation of the values in the input space.  This might lead to a choice of $g$ such as $diam_d(R)/|\mathbf{x}|$, where $diam_d$ is the diameter of $R$ with respect to some distance $d$ and $|\mathbf{x}|$ is some (problem-specific) measure of the size of $\mathbf{x}$ (\eg, bid value in an auction).  Numerous variations on this are natural and may be worth investigating.

\textbf{Information-theoretic approaches.}\ \ Information-the\-oretic approaches using conditional entropy are also natural to consider when studying privacy, and these have been used in various settings.  Most relevantly, Bar-Yehuda \etal.~\cite{BCKO} defined multiple measures based on the conditional mutual information about one player's value (viewed as a random variable) revealed by the protocol trace and knowledge of the other player's value.  It would also be natural to study objective-PAR versions using the entropy of the random variable corresponding to the (multi-player) input conditioned only on the protocol output (and not the input of any player).  Such approaches might facilitate the comparison of privacy between different problems.

\subsection{Open Questions}

There are many interesting directions for future research:

\begin{itemize}

\item  As discussed in the previous subsection, the definition and
exploration of other notions of PARs is a challenging and intriguing
direction for future work.

\item We have shown that, for both \twoMP and \twoSPA, reasonable average-case PARs
with respect to the uniform distribution are achievable. We
conjecture that our upper bounds for these problems extend to
\emph{all} possible distributions over inputs.

\item An interesting open question is proving lower bounds on the
average-case PARs for \twoMP and \twoSPA.

\item It would be interesting to apply the PAR framework presented in this paper
to other functions.

\item The extension of our PAR framework to the $n$-party communication
model is a challenging direction for future research.

\item Starting from the same place that we did,
namely~\cite{CK91,K92}, Bar-Yehuda {\it et al.}~\cite{BCKO} provided
three definitions of approximate privacy. The one that seems most
relevant to the study of privacy-approximation ratios is their
notion of {\it h-privacy}. It would be interesting to know exactly
when and how it is possible to express PARs in terms
of $h$-privacy and {\it vice versa}.

\end{itemize}

\section*{Acknowledgements}

We are grateful to audiences at University Residential Centre of Bertinoro, Boston University, DIMACS, the University of Massach\-usetts, Northwestern, Princeton, and Rutgers for helpful questions and feedback.

\appendix

\section{Relation to the Work of Bar-Yehuda \emph{et al.}~\cite{BCKO}}\label{apx:bar-yehuda}

While there are certainly some parallels between the work here and the
Bar-Yehuda \etal.\ work~\cite{BCKO}, there are significant differences in
what the two frameworks capture. Specifically:

\begin{enumerate}

\item The results in~\cite{BCKO} deal with what can be learned by a party who
knows one of the inputs. By contrast, our notion of objective
PAR captures the effect of a protocol on privacy with respect
to an external observer who does not know any of the players
private values.

\item More importantly, the framework of~\cite{BCKO} does not address the size of
monochromatic regions. As illustrated by the following example,
the ability to do so is necessary to capture the effects of
protocols on interesting aspects of privacy that are captured by
our definitions of PAR.

Consider the function $f:\{0,\ldots,2^n - 1\}\times\{0,\ldots,2^n - 1\}\rightarrow\{0,\ldots,2^{n-2}\}$
defined by $f(x,y) = floor(\frac{x}{2})$ if $x < 2^{n-1}$ and $f(x,y) =
2^{n-2}$ otherwise. Consider the following two protocols for $f$:
in $P$, player $1$ announces his value $x$ if $x < 2^{n-1}$ and
otherwise sends $2^{n-1}$ (which indicates that $f(x,y) =
2^{n-2}$); in $Q$, player $1$ announces $floor(\frac{x}{2})$ if $x < 2^n - 1$
and $x$ if $x = 2^n - 1$. Observe that each protocol induces
$2^{n-1} + 1$ rectangles.

Intuitively, the effect on privacy of these two protocols is
different. For half of the inputs, $P$ reduces by a factor of $2$
the number of inputs from which they are indistinguishable
while not affecting the indistinguishability of the other
inputs. $Q$ does not affect the indistinguishability of the
inputs affected by $P$, but it does reduce the number of inputs
indistinguishable from a given input with $x\geq 2^{n-1}$ by at
least a factor of $2^{n-2}$.

Our notion of PAR is able to capture the different effects on
privacy of the protocols $P$ and $Q$. (The average-case objective
PARs are constant and exponential in $n$, respectively.) By
contrast, the three quantifications of privacy from~\cite{BCKO}---$I_c$,
$I_i$, and $I_{c-i}$---do not distinguish between these two
protocols; we now sketch the arguments for this claim.

For each protocol, any function $h$ for which the protocol is
weakly $h$-private must take at least $2^{n-1} + 1$ different
values. This bound is tight for both $P$ and $Q$. Thus, $I_c$ cannot
distinguish between the effects of $P$ and $Q$ on $f$.

The number of rectangles induced by $P$ that intersect each row
and column equals the number induced by $Q$. Considering the
geometric interpretation of $I_P$ and $I_Q$, as well as the
discussion in Sec. VII.A of~\cite{BCKO}, we see that $I_i$ and $I_{c-i}$
(applied to protocols) cannot distinguish between the effects
of $P$ and $Q$ on $f$.
\end{enumerate}

\section{Truthful Public-Good Problem}\label{ap:tpg}

\subsection{Problem}

As in Sec.~\ref{ssec:pg}, the government is considering the construction of a bridge at cost $c$.  Each taxpayer has a private value that is the utility he would gain from the bridge if it were built, and the government wants to build the bridge if and only if the sum of the taxpayers' private values is at least $c$.  Now, in addition to determining whether to build the bridge, the government incentivizes truthful disclosure of the private values by requiring taxpayer $i$ to pay $c-\sum_{j\neq i} x_j$ if $\sum_{j\neq i} x_j < c$ but $\sum_i x_i \geq c$ (see, \eg,~\cite{nisan07agt} for a discussion of this type of approach).  The government should thus learn whether or not to build the bridge and how much, if anything, each taxpayer should pay.  The formal description of the function is as follows; the corresponding ideal partition of the value space is shown in Fig.~\ref{fig:tpgkc-k3c4-ideal}, in which regions for which the output is ``Build'' are just labeled with the appropriate value of $(t_1,t_2)$.

\begin{definition}
[\tpgkc]
\noindent \\
\noindent \underline{Input:} $c,x_1,x_2\in \{0,\ldots,2^k-1\}$  (each represented by a $k$-bit string)\\
\noindent \underline{Output:} ``Do Not Build'' if $x_1 + x_2 < c$; ``Build'' and $(t_1,t_2)$ if $x_1 + x_2 \geq c$, where $t_i = c-x_{3-i}$ if $x_{3-i}<c$ and $x_1+x_2 \geq c$, and $t_i=0$ otherwise.
\end{definition}

\begin{figure}[htp]
\begin{center}
\includegraphics[height=2in]{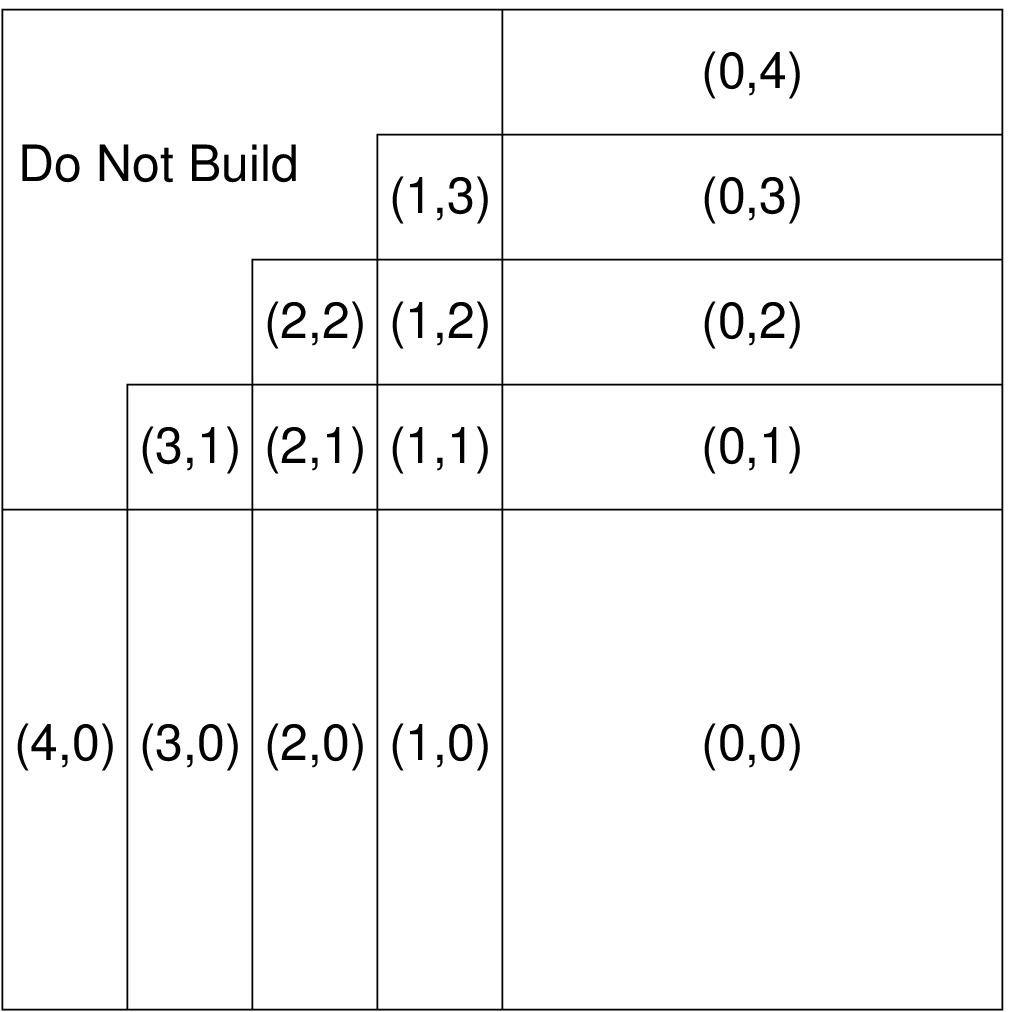}
\end{center}
\caption{Ideal partition of the value space for \tpgkc with $k=3$ and $c=4$.}\label{fig:tpgkc-k3c4-ideal}
\end{figure}

\subsection{Results}

\begin{prop}\rm{\bf{(Average-case objective PAR of \tpgkc)}}\ \
The average-case objective PAR of \tpgkc with respect to the uniform distribution is
\[
1 + \frac{c^3}{2^{2k+1}}(1-\frac{1}{c^2}).
\]
\end{prop}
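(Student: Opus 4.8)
The plan is to reduce the computation to the tile-counting framework of the proof of Thm.~\ref{thm:2spa-par-bba} and then exploit the fact that only one ideal region of \tpgkc is non-rectangular. Since every monochromatic tiling refines the ideal partition, the function-level PAR (the minimum over protocols) is obtained by minimizing $\frac{1}{2^{2k}}\sum_{R}|R^I(R)|$ over protocol-realizable tilings refining the ideal partition, exactly as in~(\ref{eq:par2}). My first step is to read off the ideal partition from Fig.~\ref{fig:tpgkc-k3c4-ideal}. The ``Do Not Build'' cells $D=\{(x_1,x_2):x_1+x_2<c\}$ all share one output and form a single staircase region of size $|D|=\frac{c(c+1)}{2}$; the region $x_1,x_2\ge c$ is the rectangle ``Build, $(0,0)$''; for each fixed $x_2<c$ the cells $x_1\ge c$ form a vertical strip ``Build, $(c-x_2,0)$'', and symmetrically there are $c$ horizontal strips; and each cell with $x_1,x_2<c$ and $x_1+x_2\ge c$ is a singleton ``Build, $(c-x_2,c-x_1)$''. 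Crucially, \emph{every} ideal region other than $D$ is already an axis-aligned rectangle.

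Next I would rewrite the objective. Because the ideal regions partition the space, $\sum_I|I|=2^{2k}$, and tiling an ideal region $I$ into $m_I$ rectangles contributes $m_I|I|$ to the sum, so
\[
\frac{1}{2^{2k}}\sum_{R}|R^I(R)| \;=\; 1+\frac{1}{2^{2k}}\sum_{I}(m_I-1)\,|I|.
\]
Since every region except $D$ is a rectangle, an optimal tiling takes $m_I=1$ for all of those (contributing nothing), and the entire excess is $(m_D-1)|D|$. The key quantitative claim is then that the staircase $D$ requires exactly $c$ tiles. For the lower bound $m_D\ge c$ I would use an anti-chain argument: the $c$ diagonal cells $(r,c-1-r)$ for $r=0,\dots,c-1$ all lie in $D$ (each has coordinate sum $c-1$), and no rectangle contained in $D$ can hold two of them, since the bounding box of $(r,c-1-r)$ and $(r',c-1-r')$ with $r<r'$ contains $(r',c-1-r)$, whose coordinate sum $c-1+(r'-r)\ge c$ puts it outside $D$. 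Hence $m_D\ge c$ for every protocol, giving the matching lower bound on PAR.

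For the upper bound I would exhibit a protocol attaining $m_D=c$ with single tiles everywhere else. Player $1$ first reveals whether $x_1<c$ and player $2$ whether $x_2<c$, splitting the space into the four quadrants above. In the quadrant $x_1,x_2\ge c$ the output is constant (one tile); in each ``mixed'' quadrant the output depends on only one player's value, who reveals it, recovering the $c$ ideal strips as single tiles. In the quadrant $x_1,x_2<c$, player $1$ reveals $x_1$; then within row $x_1$ player $2$ announces whether $x_2<c-x_1$ (one ``Do Not Build'' tile for that row) or reveals $x_2$ exactly (singleton ``Build'' tiles). This carves $D$ into exactly $c$ tiles, one per row, and every other region into single tiles. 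Substituting $m_D-1=c-1$ and $|D|=\frac{c(c+1)}{2}$ gives
\[
\mathsf{PAR}=1+\frac{(c-1)c(c+1)}{2^{2k+1}}=1+\frac{c^3-c}{2^{2k+1}}=1+\frac{c^3}{2^{2k+1}}\Bigl(1-\frac{1}{c^2}\Bigr),
\]
as claimed. I expect the main obstacle to be the combinatorial lower bound $m_D\ge c$ (the diagonal anti-chain lemma) together with verifying that the proposed quadrant-based protocol is genuinely protocol-realizable and simultaneously tiles every non-staircase region as a single rectangle; the remaining algebra is routine.
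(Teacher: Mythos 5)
Your proposal is correct and follows essentially the same route as the paper's proof: decompose the sum over ideal regions, lower-bound the number of ``Do Not Build'' tiles by $c$ via the diagonal fooling-set argument (which the paper invokes as ``the same argument as for \twoMP''), and observe that all ``Build'' regions are rectangles realizable as single tiles by a suitable protocol (which the paper asserts with ``it is easy to see that such protocols exist''). You merely make explicit the two steps the paper leaves implicit---the anti-chain lemma and the concrete quadrant-based protocol---both of which check out.
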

\begin{proof}
We may rewrite Eq.~\ref{eq:par2} as (adding subscripts for the values of $k$ and $c$ in this problem):
\[
\mathsf{PAR}_{k,c} = \frac{1}{2^{2k}} \left[\sum_{R_\mathsf{DNB}} |R^I(R_\mathsf{DNB})| + \sum_{R_\mathsf{B}} |R^I(R_\mathsf{B})|\right],
\]
where the first sum is taken over rectangles $R_\mathsf{DNB}$ for which the output is ``Do Not Build'' and the second sum is taken over rectangles $R_\mathsf{B}$ for which the output is ``Build'' together with some $(t_1,t_2)$.  Using the same argument as for \twoMP, the first sum must be taken over at least $c$ rectangles; the ideal region containing these rectangles has size $\sum_{i=1}^{c} i = c(c+1)/2$.  Considering the second sum, each of the ideal regions containing a protocol-induced rectangle is in fact a rectangle.  If the protocol did not further partition these rectangles (and it is easy to see that such protocols exist) then the total contribution of the second sum is just the total number of inputs for which the output is ``Do Not Build'' together with some pair $(t_1,t_2)$, \ie, this contribution is $4^k - c(c+1)/2$.  We may thus rewrite $\mathsf{PAR}_{k,c}$ as
\[
\mathsf{PAR}_{k,c} = \frac{1}{2^{2k}} \left[ c\frac{c(c+1)}{2} + 4^k - \frac{c(c+1)}{2}\right] = 1 + \frac{c^3}{2^{2k+1}}(1-\frac{1}{c^2})
\]
\end{proof}

Unsurprisingly, if we take $c=2^k-1$ (as in \pgk in Sec.~\ref{ssec:pg}), we obtain $\mathsf{PAR}_{k,2^k-1} = 2^{k-1}-\frac{1}{2}+\frac{1}{2^k}$, which is essentially half of the average-case PAR for \twoMP.


\begin{thebibliography}{10}

\bibitem{BBNS08}
Moshe Babaioff, Liad Blumrosen, Moni Naor, and Michael Schapira.
\newblock Informational overhead of incentive compatibility.
\newblock In {\em Proceedings of the ACM conference on Electronic commerce},
  pages 88--97, 2008.

\bibitem{BCKO}
Reuven Bar-Yehuda, Benny Chor, Eyal Kushilevitz, and Alon Orlitsky.
\newblock Privacy, additional information, and communication.
\newblock {\em IEEE Transactions on Information Theory}, 39:55--65, 1993.

\bibitem{BCNW}
Amos Beimel, Paz Carmi, Kobbi Nissim, and Enav Weinreb.
\newblock Private approximation of search problems.
\newblock In {\em Proceedings of the ACM Symposium on Theory of Computing},
  pages 119--128, 2006.

\bibitem{BGW88}
Michael Ben-Or, Shafi Goldwasser, and Avi Wigderson.
\newblock Completeness theorems for non-cryptographic, fault-tolerant
  computation.
\newblock In {\em Proceedings of the ACM Symposium on Theory of Computing},
  pages 1--10, 1988.

\bibitem{BN02}
Liad Blumrosen and Noam Nisan.
\newblock Auctions with severely bounded communications.
\newblock In {\em Proceedings of the IEEE Symposium on Foundations of Computer
  Science}, pages 406--415, 2002.

\bibitem{BS}
Felix Brandt and Tuomas Sandholm.
\newblock On the existence of unconditionally privacy-preserving auction
  protocols.
\newblock {\em ACM Trans. Inf. Syst. Secur.}, 11(2):1--21, 2008.

\bibitem{CCD88}
David Chaum, Claude Cr\'epeau, and Ivan Damgaard.
\newblock Multiparty, unconditionally secure protocols.
\newblock In {\em Proceedings of the ACM Symposium on Theory of Computing},
  pages 11--19, 1988.

\bibitem{CK91}
Benny Chor and Eyal Kushilevitz.
\newblock A zero-one law for boolean privacy.
\newblock {\em SIAM J. Discrete Math}, 4:36--47, 1991.

\bibitem{DHR00}
Yevgeniy Dodis, Shai Halevi, and Tal Rabin.
\newblock A cryptographic solution to a game-theoretic problem.
\newblock In {\em Advances in Cryptology---Crypto 2000}, Springer Verlag LNCS.

\bibitem{DiffPrivSurvey}
Cynthia Dwork.
\newblock Differential privacy.
\newblock In {\em Proceedings of the International Colloquium on Automata,
  Languages and Programming}, pages 1--12, 2006.

\bibitem{FIMNSW}
Joan Feigenbaum, Yuval Ishai, Tal Malkin, Kobbi Nissim, Martin~J. Strauss, and
  Rebecca~N. Wright.
\newblock Secure multiparty computation of approximations.
\newblock {\em ACM Transactions on Algorithms}, 2(3):435--472, 2006.

\bibitem{FS02}
Joan Feigenbaum and Scott Shenker.
\newblock Distributed algorithmic mechanism design: Recent results and future
  directions.
\newblock In {\em Proceedings of the ACM Mobicom Workshop on Discrete
  Algorithms and Methods for Mobile Computing and Communications}, pages 1--13,
  2002.

\bibitem{GRS09}
Arpita Ghosh, Tim Roughgarden, and Mukund Sundararajan.
\newblock Universally utility-maximizing privacy mechanisms.
\newblock In {\em Proceedings of the ACM Symposium on Theory of Computing},
  pages 351--360, 2009.

\bibitem{GHM06}
Elena Grigorieva, P.~Jean-Jacques Herings, and Rudolf M{\"u}ller.
\newblock The communication complexity of private value single-item auctions.
\newblock {\em Oper. Res. Lett.}, 34(5):491--498, 2006.

\bibitem{GHMV07}
Elena Grigorieva, P.~Jean-Jacques Herings, Rudolf M{\"u}ller, and Dries
  Vermeulen.
\newblock The private value single item bisection auction.
\newblock {\em Economic Theory}, 30(1):107--118, January 2007.

\bibitem{HKKN}
Shai Halevi, Robert Krauthgamer, Eyal Kushilevitz, and Kobbi Nissim.
\newblock Private approximation of {NP}-hard functions.
\newblock In {\em Proceedings of the ACM Symposium on Theory of Computing},
  pages 550--559, 2001.

\bibitem{K92}
Eyal Kushilevitz.
\newblock Privacy and communication complexity.
\newblock {\em SIAM J. Discrete Math.}, 5(2):273--284, 1992.

\bibitem{KN97}
Eyal Kushilevitz and Noam Nisan.
\newblock {\em Communication Complexity}.
\newblock Cambridge University Press, 1997.

\bibitem{NPS99}
Moni Naor, Benny Pinkas, and Reuben Sumner.
\newblock Privacy preserving auctions and mechanism design.
\newblock In {\em Proceedings of the ACM Conference on Electronic Commerce},
  pages 129--139, 1999.

\bibitem{nisan07agt}
Noam Nisan.
\newblock Introduction to mechanism design (for computer scientists).
\newblock In Noam Nisan, Tim Roughgarden, \'{E}va Tardos, and Vijay~V.
  Vazirani, editors, {\em Algorithmic Game Theory}, chapter~9. Cambridge
  University Press, 2007.

\bibitem{oeis}
The On-Line Encyclopedia of Integer Sequences.
\newblock Published electronically at \url{http://oeis.org}, 2010.

\bibitem{Vic61}
William Vickrey.
\newblock Counterspeculation, auctions and competitive sealed tenders.
\newblock {\em Journal of Finance}, pages 8--37, 1961.

\bibitem{Y79}
Andrew~C. Yao.
\newblock Some complexity questions related to distributive computing
  (preliminary report).
\newblock In {\em Proceedings of the ACM Symposium on Theory of Computing},
  pages 209--213, 1979.

\bibitem{Yao-millionaires}
Andrew~C. Yao.
\newblock Protocols for secure computation.
\newblock In {\em Proceedings of the IEEE Symposium on Foundations of Computer
  Science}, pages 160--164, 1982.

\bibitem{Yao-general}
Andrew~C. Yao.
\newblock How to generate and exchange secrets.
\newblock In {\em Proceedings of the IEEE Symposium on Foundations of Computer
  Science}, pages 162--167, 1986.

\end{thebibliography}
\end{document}